\newcommand{\conditional}{conditional\xspace}
\newtheorem{common}{Common}[section]{\bfseries}{\itshape}%
\newcommand{\numberedthing}[2]{%
     \newtheorem{#1}[common]{#2}{\bfseries}{\itshape}}%
\newtheorem{exam}[common]{Example}%
\newtheorem{definition}[common]{Definition}
\def\squareforqed{\hbox{\rlap{$\sqcap$}$\sqcup$}}
\def\endbox{\ifmmode\squareforqed\else{\unskip\nobreak\hfil
\penalty50\hskip1em\null\nobreak\hfil\squareforqed
\parfillskip=0pt\finalhyphendemerits=0\endgraf}\fi}
\newcommand{\plat}[1]{\raisebox{0pt}[0pt][0pt]{#1}} 
\newcommand{\spaces}[1]{\ #1\ }
\newcommand{\ans}{\spaces{\wedge}}
\newcommand{\awn}{AWN\xspace}
\newcommand{\datapackets}{data packets}
\newcommand{\dsnf}{sequence-number-status flag\xspace}
\newcommand{\rte}{routing table entry\xspace}
\newcommand{\rtes}{routing table entries\xspace}
\newcommand{%
  \algsetup{linenodelimiter=.,linenosize=\tiny}
  \begin{algorithm}[H]
    {\scriptsize
      \caption{\small }
      \label{pro:}
      \begin{algorithmic}[1]
        \input{processes/.tex}
	\end{algorithmic}
    }
  \end{algorithm}
}[2][]{%
  \algsetup{linenodelimiter=.,linenosize=\tiny}
  \begin{algorithm}[H]
    {\scriptsize
      \caption{\small #1}
      \label{pro:#2}
      \begin{algorithmic}[1]
        \input{processes/#2.tex}
	\end{algorithmic}
    }
  \end{algorithm}
}
\newcommand{%
\renewcommand{\algindent}{0.1em}
  \renewcommand{\algorithmicelsif}{$+$\algorithmicif}
  \renewcommand{\algorithmiccomment}[1]{\textcolor{blue}{\hspace{\algindent}/*\,#\,*/}}
  \algsetup{indent=0.7em}
  \algsetup{linenodelimiter=.,linenosize=\tiny}
  \begin{algorithm}[H]
    {\scriptsize
      \caption{\small }
      \label{pro:}
      \begin{algorithmic}[1]
        \input{processes/.tex}
	\end{algorithmic}
    }
  \end{algorithm}
}[2][]{%
\renewcommand{\algindent}{0.1em}
  \renewcommand{\algorithmicelsif}{$+$\algorithmicif}
  \renewcommand{\algorithmiccomment}[1]{\textcolor{blue}{\hspace{\algindent}/*\,##1\,*/}}
  \algsetup{indent=0.7em}
  \algsetup{linenodelimiter=.,linenosize=\tiny}
  \begin{algorithm}[H]
    {\scriptsize
      \caption{\small #1}
      \label{pro:#2}
      \begin{algorithmic}[1]
        \input{processes/#2.tex}
	\end{algorithmic}
    }
  \end{algorithm}
}
\newcommand{%
\renewcommand{\algindent}{0.1em}
  \renewcommand{\algorithmicelsif}{$+$\algorithmicif}
  \renewcommand{\algorithmiccomment}[1]{\textcolor{blue}{\hspace{\algindent}/*\,#\,*/}}
  \algsetup{indent=0.7em}
  \algsetup{linenodelimiter=.,linenosize=\tiny}
  \begin{algorithm}[H]
    {\scriptsize
      \caption{\small }
      \label{pro:}
      \begin{algorithmic}[1]
        \input{processes/.tex}
	\end{algorithmic}
    }
  \end{algorithm}
}[2][]{%
\renewcommand{\algindent}{0.1em}
  \renewcommand{\algorithmicelsif}{$+$\algorithmicif}
  \renewcommand{\algorithmiccomment}[1]{\textcolor{blue}{\hspace{\algindent}/*\,##1\,*/}}
  \algsetup{indent=0.7em}
  \algsetup{linenodelimiter=.,linenosize=\tiny}
  \begin{algorithm}[H]
    {\scriptsize
      \caption{\small #1}
      \label{pro:#2}
      \begin{algorithmic}[1]
        \input{processes/#2.tex}
	\end{algorithmic}
    }
  \end{algorithm}
}
\newenvironment{simpleProcess}{%
  \renewcommand{\algindent}{0.1em}
  \renewcommand{\algorithmicelsif}{$+$\algorithmicif}
  \renewcommand{\algorithmiccomment}[1]{\textcolor{blue}{\hspace{\algindent}/*\,##1\,*/}}
  \algsetup{indent=0.7em}
  \begin{algorithmic}%
  }{
  \end{algorithmic}
  }
\def\comesfrom{\@transition\leftarrowfill}
\def\goesto{\@transition\rightarrowfill}
\def\ngoesto{\@transition\nrightarrowfill}
\def\Goesto{\@transition\Rightarrowfill}
\def\nGoesto{\@transition\nRightarrowfill}
\def\xmapsto{\@transition\mapstofill}
\def\nxmapsto{\@transition\nmapstofill}
\def\@transition#1{\@@transition{#1}}
\newbox\@transbox
\newbox\@arrowbox
\newbox\@downbox
\def\@@transition#1#2%
\wd\@transbox{#1}
\@transbox\hbox{$\mathop{\box\@arrowbox}\limits^{\box\@transbox}$}
\def\nrightarrowfill{$\m@th\mathord-\mkern-6mu%
  \cleaders\hbox{$\mkern-2mu\mathord-\mkern-2mu$}\hfill
  \mkern-6mu\mathord\not\mkern-2mu\mathord\rightarrow$}
\def\Rightarrowfill{$\m@th\mathord=\mkern-6mu%
  \cleaders\hbox{$\mkern-2mu\mathord=\mkern-2mu$}\hfill
  \mkern-6mu\mathord\Rightarrow$}
\def\nRightarrowfill{$\m@th\mathord=\mkern-6mu%
  \cleaders\hbox{$\mkern-2mu\mathord=\mkern-2mu$}\hfill
  \mkern-6mu\mathord\not\mathord\Rightarrow$}
\def\mapstofill{$\m@th\mathord\mapstochar\mathord-\mkern-6mu%
  \cleaders\hbox{$\mkern-2mu\mathord-\mkern-2mu$}\hfill
  \mkern-6mu\mathord\rightarrow$}
\def\nmapstofill{$\m@th\mathord\mapstochar\mathord-\mkern-6mu%
  \cleaders\hbox{$\mkern-2mu\mathord-\mkern-2mu$}\hfill
  \mkern-6mu\mathord\not\mkern-2mu\mathord\rightarrow$}
\newcommand{\ar}[1]{\mathrel{\goesto{#1}}}            
\begin{document}%
\global\boilerplate={}
\global\copyrightetc{doi:\href{http://dx.doi.org/10.1145/2387238.2387274}{10.1145/2387238.2387274}}
\conferenceinfo{MSWiM'12,} {October 21--25, 2012, Paphos, Cyprus.} 
\CopyrightYear{2012} 
\crdata{978-1-4503-1628-6/12/10} 
\clubpenalty=10000 
\widowpenalty = 10000
\title{A Rigorous Analysis of AODV and its Variants}%
\numberofauthors{6}
\author{
\alignauthor Peter H{\"o}fner\\
       \affaddr{NICTA, Australia}\\[0.5mm]
       \affaddr{University~of~New~South~Wales,}\\
       \affaddr{Australia}\\
       \affaddr{\tt Peter.Hoefner@nicta.com.au}
\alignauthor Wee Lum Tan\\
       \affaddr{NICTA, Australia}\\[0.5mm]
       \affaddr{University~of~Queensland,}\\
       \affaddr{Australia}\\
       \affaddr{\tt WeeLum.Tan@nicta.com.au}
\alignauthor Annabelle McIver\
       \affaddr{Macquarie University, Australia}\\[0.5mm]
       \affaddr{NICTA, Australia}\\
       \affaddr{\tt annabelle.mciver@mq.edu.au}
\and  
\alignauthor Rob van Glabbeek\\
       \affaddr{NICTA, Australia}\\[0.5mm]
       \affaddr{University~of~New~South~Wales,}\\
       \affaddr{Australia}\\
       \affaddr{\tt rvg@cs.stanford.edu}
\alignauthor Marius Portmann\\
       \affaddr{NICTA, Australia}\\[0.5mm]
       \affaddr{University~of~Queensland,}\\
       \affaddr{Australia}\\
       \affaddr{\tt marius@itee.uq.edu.au}
\alignauthor Ansgar Fehnker\\
       \affaddr{NICTA, Australia}\\[0.5mm]
       \affaddr{University~of~New~South~Wales,}\\
       \affaddr{Australia}\\
       \affaddr{\tt Ansgar.Fehnker@nicta.com.au}
}

\maketitle
\vspace{-24pt} 
\begin{abstract}
In this paper we present a rigorous analysis of the Ad hoc On-Demand
Distance Vector (AODV) routing protocol using a formal specification  in {\awn} (Algebra for
Wireless Networks), a process algebra which has been specifically
tailored for the modelling of Mobile Ad Hoc Networks and Wireless Mesh Network
protocols.  Our formalisation models the exact details of the core
functionality of AODV, such as route discovery, route maintenance and
error handling.  We demonstrate how {\awn} can be used to reason about
critical  protocol correctness properties by providing a detailed proof of loop
freedom. In contrast to evaluations using simulation or other formal methods such as model checking, 
our proof is generic and holds for any possible network scenario in terms of network topology, node mobility, traffic pattern, etc.
 A key contribution of this paper is the demonstration of how
the reasoning and proofs can relatively easily be adapted to protocol
variants.  
\end{abstract}
\hfuzz2pt 

\category{C.2.2}{Net\-work Protocols}{Routing protocols; Protocol verification}
\category{F.3.1}{Specifying and Verifying and Reasoning about Programs}{Invariants}

\terms{Reliability; Theory; Verification}

\keywords{AODV; loop freedom; process algebra; routing protocols;
          wireless mesh networks}

\newpage
\section{Introduction}
\label{sec:intro}

Routing protocols are crucial to the dissemination of data packets between nodes in 
Wireless Mesh Networks (WMNs) and Mobile Ad Hoc Networks (MANETs). One of the most popular
routing protocols that is widely used in WMNs and MANETs is the Ad hoc On-Demand Distance
Vector (AODV) routing protocol~\cite{rfc3561}. AODV is one of the four protocols currently
standardised by the IETF MANET working group, and it also forms the basis of new WMN
routing protocols, including HWMP in the upcoming IEEE 802.11s wireless mesh network
standard~\cite{HWMP}. The details of the AODV protocol are laid out in the RFC 3561~\cite{rfc3561}. However, due to the use of English prose,
this specification contains ambiguities and contradictions. This can lead to significantly different implementations of the AODV routing protocol, depending on the developer's understanding and reading of the AODV RFC\@. In the worst case scenario, an AODV implementation may contain serious flaws, such as routing loops.

Traditional approaches to the analysis of AODV and many other AODV-based protocols~\nocite{DYMO22,HWMP,AODV-ST,SBM06,PPI08}[5,\,10,\,16,\,18,\,15]
 are simulation\linebreak[3] and test-bed experiments. While these are important and valid methods for protocol evaluation, in particular for quantitative performance evaluation, there are limitations in regards to the evaluation of basic protocol correctness properties. Experimental evaluation is resource intensive and time-consuming, and, even after a very long time of evaluation, only a finite set of network scenarios can be considered---no general guarantee can be given about correct protocol behaviour for a wide range of unpredictable deployment scenarios~\cite{Verisim}.  This problem is illustrated by recent discoveries of limitations in AODV-like protocols that have been under intense scrutiny over many years \cite{MK10}.  We believe that formal methods can help in this regard; they complement simulation and test-bed experiments as methods for protocol evaluation and verification, and provide stronger and more general assurances about protocol properties and behaviour.

This paper is based on a complete and accurate formal specification of the core functionality of the AODV routing protocol using the specification language \awn (Algebra of Wireless Networks)~\cite{ESOP12}. 
\awn provides the right level of abstraction to model key features such as 
unicast and broadcast, while abstracting from implementation-related details. As its semantics is completely unambiguous, specifying a protocol in such a framework enforces total precision and the removal of any ambiguities. A key contribution is to demonstrate how {\awn} can be used to
support reasoning about protocol behaviour and to provide a rigorous proof of key protocol
properties, using the example of loop freedom. In contrast to what can be achieved, e.g., by model checking, our proofs apply to all conceivable dynamic network topologies.

We analyse different readings of the AODV RFC, and show which
interpretations do satisfy the loop freedom criterion, and which do
not.  We also discuss two limitations of the AODV protocol and propose
solutions to them. We show how our formal specification can be used to
analyse the proposed modifications and show that these AODV variants
are loop free.

The rigorous protocol analysis discussed in this paper has the
potential to save a significant amount of time in the development and
evaluation of new network protocols, can provide increased levels of
assurance of protocol correctness, and complements simulation and
other experimental protocol evaluation approaches.

The remainder of this paper is organised as follows.
We briefly describe {\awn} in Section \ref{sec:language}, and use it to formally specify AODV in Section \ref{sec:formalspec}. We discuss one of several ambiguities and contradictions in the AODV RFC, and propose potential resolutions in Section \ref{sec:ambiguities}. We then summarise the key points of a detailed proof of loop freedom of AODV in Section \ref{sec:loopfreedom}, and demonstrate how the reasoning and proof can relatively easily be adapted to variants of the AODV protocol in Section \ref{sec:analysis}. We discuss related work in Section \ref{sec:related}, and summarise our work  
in Section \ref{sec:conclude}.
\section{AWN---An Algebra for Wireless Networks}
\label{sec:language}

\emph{Process algebras} are standard tools to describe interactions, communications and
synchronisations between a collection of independent agents, processes or network nodes.
They provide algebraic laws that allow formal reasoning.
For the specification of and for formal reasoning about AODV, we use
\awn~\cite{ESOP12,TR11}, a process algebra specifically tailored
for WMNs. \awn allows us to write a protocol specification in a simple
language, which makes it easy to read and to use.
Its key operators are {\em \conditional unicast}---allowing error handling in
response to failed communications while abstracting from link layer
implementations of the communication handling---and 
{\em local broadcast}---allowing a node to send messages to all 
its immediate neighbours as implemented by the physical and 
data link layer.

In this section we only give an overview of the main operations (Table~\ref{tb:procexpr}) and 
illustrate the use of {\awn} with a simple example. Additional
explanations and a full description can be found in~\cite{ESOP12,TR11}.

The example considers a network
of two nodes on which the same process is running.
One node broadcasts an integer value. A received message will be delivered to
the application layer if its value is $1$. Otherwise the node
decrements the value and broadcasts the new value. The behaviour of
each node can be modelled by:

\newcommand{\XP}{\keyw{X}}%
\newcommand{\YP}{\keyw{Y}}%
\begin{simpleProcess}
	\item[\XP(\keyw{n})]\hspace{-\labelsep}\ $\stackrel{{\it def}}{=}$
 	\textbf{broadcast}(\keyw{n}).\YP(\,)
	\item[\YP(\,)]\hspace{-\labelsep}\hspace{1.7mm} $\stackrel{{\it def}}{=}$
 	\textbf{receive}(\keyw{m}).([$\keyw{m} \mathord= 1$] \textbf{deliver}(\keyw{m}).\YP(\,)\ +\
		[$\keyw{m} \mathord{\not=} 1$] \XP($\keyw{m}\mathord-1$))
\end{simpleProcess}%

If a node is in a state $\XP(\keyw{n})$ it will broadcast $\keyw{n}$
and continue in state $\YP(\,)$. If a node is in state $\YP(\,)$, and it receives $\keyw{m}$, it has two
ways to continue. Process $[\keyw{m} \mathord=1]\,\textbf{deliver}(\keyw{m}).\YP(\,)$ is enabled if $\keyw{m} \mathord= 1$. In that case $\keyw{m}$
will be delivered to the application layer, and the process returns to $\YP(\,)$. Alternatively, if
$\keyw{m} \mathop{\not=} 1$, the process continues as $\XP(\keyw{m}\mathord-1)$. Note that calls to
processes use expressions as parameters, in this case $\keyw{m}\mathord-1$.

Assume that the nodes $A$ and $B$ are within communication range of each other; node~$A$ in state
$\XP(2)$, and node $B$ in $\YP(\,)$. 
Then, node $A$ broadcasts $2$ and
continues as $\YP(\,)$. Node $B$ receives $2$, and continues as $\XP(1)$. Next $B$
broadcasts $1$, and continues as $\YP(\,)$, while node $A$ receives $1$, and, since the condition
$\keyw{m}\mathord= 1$ is satisfied, \textbf{deliver}s $1$ and continues as $\YP(\,)$. 
This gives rise to transitions from one state to the other:
\vspace{-1mm}
\newcommand{\sm}[1]{\mbox{$\scriptstyle #1$}}
\[\begin{array}{r@{\;}l}
\XP(2)\,\|\,\YP(\,) \ar{{\sm{A}:\textbf{broadcast}\sm{(2)}}} \YP(\,)\,\|\,\XP(1)
&\ar{{\sm{B}:\textbf{broadcast}\sm{(1)}}}\\
&\ar{{\sm{A}:\textbf{deliver}\sm{(1)}}} \YP(\,)\,\|\,\YP(\,)
\end{array}\vspace{-1mm}\]
In state $\YP(\,)\,\|\,\YP(\,)$ no further activity is possible; the network has reached a \emph{deadlock}.

 \begin{table}[t]
  \vspace{1.2ex}
 \centering
{\small
  \setlength{\tabcolsep}{2.6pt}
 \begin{tabular}{|l|l|}
\hline
\rule[6.5pt]{0pt}{1pt}%
$X(\dexp{exp}_1,\ldots,\dexp{exp}_n)$& process name with arguments\\
$P+Q$ & choice between processes $P$ and $Q$\\
$\cond{\varphi}P$&conditional process;\\& execute $P$ only if condition $\varphi$ holds\\
$\assignment{\keyw{var}:=\dexp{exp}}P$&assignment followed by process $P$\\
$\broadcastP{\dexp{ms}}.P $&broadcast message \dexp{ms} followed by $P$\\
$\groupcastP{\dexp{dests}}{\dexp{ms}}.P$&iterative unicast to all destinations\\
	&\dexp{dests} (if broadcast is inappropriate)\\
$\unicast{\dexp{dest}}{\dexp{ms}}.P \prio Q$& unicast $\dexp{ms}$ to $\dexp{dest}$; if successful pro-\\
&ceed with $P$; otherwise with $Q$\\
$\deliver{\dexp{data}}.P$&deliver data to application layer\\
$\receive{\msg}.P$&receive a message\\
$P\|Q$		&parallel composition of nodes\\
\hline
\end{tabular}}

\caption{Process expressions}
\label{tb:procexpr}
\vspace{-4ex}
\end{table}

\section{A Formal Specification of AODV}
\label{sec:formalspec}

AODV is a reactive protocol, which means that routes are only established
on demand.  If a node $S$ wants to send a data packet to node $D$, but
currently does not know a route, it buffers the
packet and initiates a route discovery process
by broadcasting a route request (RREQ) message in the network. An intermediate
node $A$ that receives this message creates a \rte for a route
towards $S$, referred to as a \emph{reverse route}, and re-broadcasts the RREQ\@. This is repeated until the RREQ reaches the destination~$D$, or alternatively a node with a
route to $D$. In both cases, the node replies by unicasting a route reply (RREP) back to the source~$S$,
via the previously established reverse route. When forwarding RREP messages, a node
creates a \rte for node $D$, called the \emph{forward route}. When the RREP reaches~$S$, a route between $S$ and $D$  is established and data packets
can start to flow. In the event of link and route breaks, AODV uses route error messages (RERR) to
notify the affected nodes. AODV uses sequence numbers to indicate the freshness of routes and to
avoid routing loops. Full details are given in \cite{rfc3561}.

\subsection{Modelling AODV}\label{sec:aodv}
We present a model of
AODV using {\awn}. The formalisation
is a faithful representation of the core functionality of AODV as defined in~\cite{rfc3561}.
We currently do not model optional
features such as local route repair, expanding ring search, gratuitous route reply and multicast.
We also abstract from all timing issues, since
{\awn} currently does not support time. In concrete terms, this means that the
AODV timing parameters \verb|ACTIVE_ROUTE_TIMEOUT|, \verb|DELETE_PERIOD| and \verb|PATH_DISCOVERY_TIME|
are set to \mbox{infinity}.

\begin{table}[t]
\vspace{-1.5ex}
\centering
\begin{minipage}{0.999\columnwidth}
\renewcommand{\algindent}{0.1em}
  \renewcommand{\algorithmicelsif}{$+$\algorithmicif}
  \renewcommand{\algorithmiccomment}[1]{\textcolor{blue}{\hspace{\algindent}/*\,#The basic routine\,*/}}
  \algsetup{indent=0.7em}
  \algsetup{linenodelimiter=.,linenosize=\tiny}
  \begin{algorithm}[H]
    {\scriptsize
      \caption{\small The basic routine}
      \label{pro:aodv_short}
      \begin{algorithmic}[1]
        \DEFPROCESS{\aodvID}{\ip,\sn,\rt,\rreqs,\queues}
	\IFempty
		\receiveL{\msg}\ .																															\label{aodv:line2}
		\COMLINE{depending on the message, different processes are called}	
		\PAR
		\IF[new DATA packet]{$\msg = \newpkt{\data}{\dip}$}																	\label{aodv:line4}
			\pktP{\data}{\dip}{\ip}{\ip}{\sn}{\rt}{\rreqs}{\queues}																		\label{aodv:line5}
		\ELSIF[incoming DATA packet]{$\msg = \pkt{\data}{\dip}{\oip}$}  
			\pktP{\data}{\dip}{\oip}{\ip}{\sn}{\rt}{\rreqs}{\queues}
		\ELSIF[\!RREQ\!]{\!$\msg\mathop= \rreq{\hops}{\rreqid}{\dip}{\dsn}{\oip}{\osn}{\sip}$\!}					\label{aodv:line8}
			\COMLINE{update the route to \sip\ in \rt}																					\label{aodv:line9}
			\UPD{\rt:=\upd{\rt}{(\sip,0,\val,1,\sip,\emptyset)}}																		\label{aodv:line10}
			\rreqP{\hops}{\rreqid}{\dip}{\dsn}{\oip}{\osn}{\sip}{\ip}{\sn}{\rt}{\rreqs}{\queues}
		\ELSIF[RREP]{$\msg = \rrep{\hops}{\dip}{\dsn}{\oip}{\sip}$}															\label{aodv:line12}
			\COMLINE{update the route to \sip\ in \rt}																					\label{aodv:line13}
			\UPD{\rt:=\upd{\rt}{(\sip,0,\val,1,\sip,\emptyset)}}																		\label{aodv:line14}
			\rrepP{\hops}{\dip}{\dsn}{\oip}{\sip}{\ip}{\sn}{\rt}{\rreqs}{\queues}												\label{aodv:line15}
		\ELSIF[RERR]{$\msg = \rerr{\dests}{\sip}$}																					\label{aodv:line16}
			\COMLINE{update the route to \sip\ in \rt}																					\label{aodv:line17}
			\UPD{\rt:=\upd{\rt}{(\sip,0,\val,1,\sip,\emptyset)}}																		\label{aodv:line18}
			\rerrP{\dests}{\sip}{\ip}{\rt}{\sn}{\rreqs}{\queues}																			\label{aodv:line19}
		\ENDIFii
		\ENDPAR																																		\label{aodv:line20}
	\ELSIF[send a queued data packet]{$\mbox{Let } \dip\in\qD{\queues}\cap\akD{\rt}$}				\label{aodv:line22}
		\STATE \dots\label{aodv:lineDots}
	\ENDIFii

	\end{algorithmic}
    }
  \end{algorithm}

\end{minipage}
\vspace{-4ex}
\end{table}

In addition to modelling the complete set of core functionalities of the
AODV protocol, our model also covers the interface to higher protocol layers via the injection
and delivery of application layer data,
as well as the forwarding of {\datapackets} at intermediate nodes.
Although this is not part of the AODV protocol specification, it is necessary
for a practical model of any reactive routing protocol, where protocol activity is triggered via the
sending and forwarding of \datapackets.

Our AODV model consists of the following six processes:
\begin{itemize}
\item \textbf{$\AODV$}, the {main process}, reads a message
	from the message queue (Line~\ref{aodv:line2} of Process~\ref{pro:aodv_short})
	and calls the appropriate process {\PKT}, {\RREQ}, {\RREP}, or
	{\RERR} to handle it (Lines~\ref{aodv:line4}--\ref{aodv:line19}).
	The process also handles the forwarding of any queued {\datapackets}
	if a valid route to their destination is known (Lines~\ref{aodv:line22} ff.).
\item  \textbf{$\PKT$} deals with received \datapackets,
	including forwarding if a route exists, and sending an error message if the route is broken.
	If the data packet originates at the local node
	and no route to the destination exists, the
	process buffers the data packet and initiates a new
	route discovery process.
\item \textbf{$\RREQ$}  deals with received RREQ messages,
and will be discussed in detail below.
\item \textbf{$\RREP$} deals with received RREP messages,
including the updating of routing tables and handling of errors.%
\item \textbf{$\RERR$} models the processing of AODV error messages.
\item \textbf{$\QMSG$}  describes the general handling of incoming AODV messages: whenever
        a message is received, it is first stored in a FIFO queue. As soon as the
	corresponding node is able to handle a message it retrieves the oldest message from the
	queue and handles it.
\end{itemize}

{
\renewcommand{\ip}{\dval{ip}}
\renewcommand{\dip}{\dval{dip}}
\renewcommand{\oip}{\dval{oip}}
\renewcommand{\sip}{\dval{sip}}
\renewcommand{\rip}{\dval{rip}}
\renewcommand{\rt}{\dval{rt}}
\renewcommand{\r}{\dval{r}}
\renewcommand{\osn}{\dval{osn}}
\renewcommand{\dsn}{\dval{dsn}}
\renewcommand{\rsn}{\dval{rsn}}
\renewcommand{\flag}{\dval{flag}}
\renewcommand{\hops}{\dval{hops}}
\renewcommand{\nhip}{\dval{nhip}}
\renewcommand{\pre}{\dval{pre}}
\renewcommand{\dests}{\dval{dests}}
\renewcommand{\rreqid}{\dval{rreqid}}
\renewcommand{\rreqs}{\dval{rreqs}}

Each node in an AODV network maintains a \emph{routing table} to
  keep track of the node's routing information collected so far.
A routing table consists of sets of  entries of the form
$(\dip, \dsn, \flag, \hops, \nhip, \pre)$,  with $\dip$ being the
node identifier (typically IP address) of the ultimate destination node, and
$\dsn$ the destination sequence number, which represents the ``freshness''
of this \rte.
The $\flag$ parameter indicates whether an entry is valid or invalid, and $\hops$
represents the distance to the destination node $\dip$ in number of hops.
$\nhip$ identifies the next hop node along the route to node $\dip$, and $\pre$ is the set of
 \mbox{\emph{precursors}}---nodes
that ``rely" on this \rte for their own routes.
Following~\cite{rfc3561}, a \rte would also contain a {\em \dsnf}.
In the present paper we abstract from this flag, since (a) the main
results are independent of the existence of the flag, and (b) none
of the common implementations (AODV-UU~\cite{AODVUU},
Kernel-AODV~\cite{AODVNIST}, AODV-UIUC~\cite{Kawadia03},
AODV-UCSB~\cite{CB04}, AODV-ns2\,\footnote{\url{www.auto-nomos.de/ns2doku/aodv_8cc-source.html}}) maintains this flag.\footnote{Kernel-AODV implements the
  flag, but does not use it.}
Hence the specification here follows the implementations available.

In a routing table $\rt$ there is at most one entry for each destination
$\dip$; $\sqn{\rt}{\dip}$ denotes the sequence number of that entry
and $\status{\rt}{\dip}$, $\dhops{\rt}{\dip}$ and
  $\nhop{\rt}{\dip}$ its validity, hop count and next hop. 
Furthermore the sets $\kD{\rt}$ and $\akD{\rt}$ of destinations
contain all entries of $\rt$ for which there is an (arbitrary) entry or a valid entry, resp.
The function $\fnupd$ updates a routing table $\rt$ with an entry $r$, which is one
of the major activities of $\AODV$:

\vspace{-\abovedisplayskip}
{\small
\[\begin{array}{@{}r@{\hspace{0.1em}}c@{\hspace{0.1em}}l@{}}
\upd{\rt}{\r}&:=& \left\{
\begin{array}{@{}l@{\hspace{3pt}}l@{}r@{}}
\phantom{n}\rt\mathop\cup\{\r\} & \mbox{if }  \pi_{1}(\r)\not\in\kD{\rt}&\textit{//r is new}\\[1mm]
\nrt\mathop\cup\{\nr\}&\mbox{if } \sqn{\rt}{\pi_{1}(\r)}\mathord{<}\pi_{2}(\r)&\textit{//fresher} \\[1mm]
\nrt\mathop\cup\{\nr\}&\mbox{if } \sqn{\rt}{\pi_{1}(\r)}\mathord{=}\pi_{2}(\r) \\&\phantom{\mbox{if}}\wedge \dhops{\rt}{\pi_{1}(\r)}\mathord{>}\pi_{4}(\r)\!&\textit{//shorter} \\[1mm]
\nrt\mathop\cup\{\nr\}&\mbox{if } \sqn{\rt}{\pi_{1}(\r)}\mathord{=}\pi_{2}(\r)&\textit{//replaces}
\\&\phantom{\mbox{if}}\wedge \status{\rt}{\pi_{1}(\r)}\mathord{=}\inval&\textit{ invalid} \\[1mm]
\nrt\mathop\cup\{\nr'\}&\mbox{if } \pi_{2}(\r)\mathord{=}0&\hspace{-4mm}\textit{//unk.\ sqn}\\[1mm]
\nrt\mathop\cup\{\ns\}&\mbox{otherwise\ ,}
\end{array}
\right.
\end{array}\]
}
\vspace{-\abovedisplayskip}

\noindent
where the projections $\pi_{1}$, $\pi_{2}$ and $\pi_{4}$ select the
respective component from an entry, namely the destination,
the destination sequence number and the hop count. $\s$ is the current entry in $\rt$
for destination $\pi_{1}(\r)$ (if it exists); and $\nrt\mathop{:=} \rt-\s$
removes $\s$ from $\rt$.  The entry $\nr$
is identical to $\r$ except that the precursors from the corresponding
\rte are added and $\ns$
is generated from $\s$ by adding the precursors of $\r$. The entry
$\nr'\!$ is identical to $\nr$ except that the sequence number is replaced by the one from the routing table (route $s$).

If a route is not valid any longer, instead of deleting it, AODV sets
its validity flag to invalid. This way, the stored information on the
route, such as the sequence number and hop count, remains accessible.
We model route invalidation by a function {\fninv} whose
arguments are a routing table and a set $\dests$ of pairs $(\rip,\rsn)$
of a destination {\rip} to be invalidated, and the sequence
number of the invalidated \rte.
Normally, $\rsn$ is obtained by incrementing the last known sequence
number of the route.

In our formalisation, a route request message has the form
\rreq{\hops}{\rreqid}{\dip}{\dsn}{\oip}{\osn}{\sip},
where {\hops} is the number of hops the RREQ has already travelled from its origin {\oip},
and {\rreqid} (in combination with {\oip}) is a unique identifier of the message.
{\dip} is the destination node identifier (IP address) of the route request and
{\dsn} the last known corresponding sequence number.
The parameter {\oip} is the address of the originator of the route request and
{\osn} is its sequence number.
Finally, {\sip} represents the sender IP address, i.e., the address
of the intermediate node from which the request was received.
Any node forwarding such a message updates {\sip} with its own
address, increments {\hops}, and retains all other parameters.
A reply to such a message has the form
\rrep{\hops}{\dip}{\dsn}{\oip}{\sip}, where {\dip} and {\oip} are
copied from the corresponding RREQ message and {\hops} is the distance
from {\dip} to {\sip}.
}
The processes {\RREQ} and {\RREP} that handle incoming RREQ and RREP
messages maintain variables {\dip}, {\oip}, etc.\ to store the values
of the parameters of these messages, as summarised
below.

{\small\begin{center}
\setlength{\tabcolsep}{3pt}
\begin{tabular}{|c|l|}
\hline
\textbf{Variables} & \textbf{Used for}\\
\hline
\ip   &  address of current node
\\
\dip  &  destination address
\\
\oip  &  originator of a route request or data packet
\\
\rip  &  destination of invalid route
\\
\sip  &  sender of AODV control message
\\
\nhip &  next hop towards some destination
\\
\hline
\end{tabular}
\end{center}
}

The process {\aodvID}, specified by
Process~\ref{pro:aodv_short}, deals with the message handling of the node.
It stores its own address in the variable {\ip}, its own sequence number in {\sn}, manages its routing table
$\rt$, records all route requests seen so far in $\rreqs$ and
maintains in $\queues$ {\datapackets} to be sent. Initially, $\rt$, $\rreqs$ and
$\queues$ are set to empty, and $\sn$ to $1$.

\subsection{Route Request Handling}

{
\renewcommand{\ip}{\dval{ip}}
\renewcommand{\dip}{\dval{dip}}
\renewcommand{\oip}{\dval{oip}}
\renewcommand{\sip}{\dval{sip}}
\renewcommand{\rip}{\dval{rip}}
\renewcommand{\rt}{\dval{rt}}
\renewcommand{\r}{\dval{r}}
\renewcommand{\osn}{\dval{osn}}
\renewcommand{\dsn}{\dval{dsn}}
\renewcommand{\rsn}{\dval{rsn}}
\renewcommand{\flag}{\dval{flag}}
\renewcommand{\hops}{\dval{hops}}
\renewcommand{\nhip}{\dval{nhip}}
\renewcommand{\pre}{\dval{pre}}
\renewcommand{\dests}{\dval{dests}}
\renewcommand{\rreqid}{\dval{rreqid}}
\renewcommand{\rreqs}{\dval{rreqs}}

In this paper, we discuss only  the model of the RREQ process; 
see \cite{TR11} for a complete model
 of all AODV processes.\footnote{There, the \dsnf is modelled as well.}

A route discovery in AODV is initiated by a source node broadcasting a RREQ message.
Process~\ref{pro:rreq_short} shows our process algebra specification of the handling of a RREQ message received
by a node \ip.

If the RREQ with the same {\oip} and {\rreqid} has been seen previously by the node, it is
ignored, and we go back to the main AODV process (Lines~\ref{rreq:line2}--\ref{rreq:line3}). If the
RREQ is new (Line~\ref{rreq:line4}), we update the routing table by adding a ``reverse route'' entry
to {\oip}, the originator of the RREQ, via node {\sip}, with distance \hops+1 (Line~\ref{rreq:line6}).  If there
already is a route to {\oip} in the node's routing table {\rt}, it is only updated with the new
route, if the new route is ``better'', i.e., fresher and/or shorter
and/or replacing an invalid route
(cf.\ Section~\ref{sec:aodv}). The process also adds the message to the list of known RREQs (Line~\ref{rreq:line8}).

Lines~\ref{rreq:line10}--\ref{rreq:line19} deal with the case where the node receiving the RREQ is the
intended destination, i.e., {\dip}={\ip} (Line~\ref{rreq:line10}).
In this case, a RREP message needs to be sent to the originating node \oip. According to the AODV RFC,
the node's sequence number is set to the maximum of the node's current sequence number and the
destination sequence number (\dsn) in the RREQ message (Line~\ref{rreq:line12}).

The RREP message is initialised as follows: hop count (\hops) is set to $0$, the destination (\dip) and originator (\oip)
are copied from the corresponding RREQ message and the destination's sequence number is the node's sequence number \sn. Of course, the sender's IP address (\sip) is set to the node's ip (Line~\ref{rreq:line14a}).
The RREP message is unicast to the next hop along the reverse route back to the originator of the
corresponding RREQ message, and if this is successful, the process goes back to the AODV process (Line~\ref{rreq:line14}).

If the unicast of the RREP fails, we proceed with Lines~\ref{rreq:line15}--\ref{rreq:line19}, in which a route error
(RERR) message is generated and sent.  This conditional unicast is implemented in our model with the \awn construct 
$\unicast{\dexp{dest}}{\dexp{ms}}.P \prio Q$ (Lines~\ref{rreq:line14a}ff.)
}
\begin{table}[t]%
\vspace{-0.7em}%
\centering%
\begin{minipage}{0.999\columnwidth}%
\renewcommand{\algindent}{0.1em}
  \renewcommand{\algorithmicelsif}{$+$\algorithmicif}
  \renewcommand{\algorithmiccomment}[1]{\textcolor{blue}{\hspace{\algindent}/*\,#RREQ handling\,*/}}
  \algsetup{indent=0.7em}
  \algsetup{linenodelimiter=.,linenosize=\tiny}
  \begin{algorithm}[H]
    {\scriptsize
      \caption{\small RREQ handling}
      \label{pro:rreq_short}
      \begin{algorithmic}[1]
        \DEFPROCESS{\RREQ}{\hops,\rreqid,\dip,\dsn,\oip,\osn,\sip\,,\,\ip,\sn,\rt,\rreqs,\queues}
	\IF[the RREQ has been handled before]{$(\oip,\rreqid)\in\rreqs$}																\label{rreq:line2}
		\aodvL{\ip}{\sn}{\rt}{\rreqs}{\queues} \COM{silently ignore RREQ}															\label{rreq:line3}
	\ELSIF[the RREQ is new to this node]{$(\oip,\rreqid)\not\in\rreqs$}															\label{rreq:line4}
		\COMLINE{update the route to \oip\ in \rt}																								\label{rreq:line5}
		\UPD{\rt:=\upd{\rt}{(\oip,\osn,\val,\hops+1,\sip,\emptyset)}}																		\label{rreq:line6}
		\COMLINE{update \rreqs\ by adding $(\oip,\rreqid)$}																			\label{rreq:line7}
		\UPD{\rreqs:=\rreqs\cup\{(\oip,\rreqid)\}}																								\label{rreq:line8}
		\PAR	
		\IF[this node is the destination node]{$\dip=\ip$}																					\label{rreq:line10}
			\UPD{\sn:=\max(\sn,\dsn)}	\COMMENT{update the sqn of \ip}															\label{rreq:line12}
			\COMLINE{unicast a RREP towards \oip\ of the RREQ}																	\label{rreq:line13}
			\STARTPRIO
					\unicast{\nhop{\rt}{\oip}}{{\rrep{$0$}{\dip}{\sn}{\oip}{\ip}}}\,. 														\label{rreq:line14a}								
					\aodvL{\ip}{\sn}{\rt}{\rreqs}{\queues}																							\label{rreq:line14}
				\PRIO
					\COMspec{if transmission fails, a RERR is generated}																\label{rreq:line15}
					\UPD{\dests:=\{(\rip,\inc{\sqn{\rt}{\rip}})|\rip\in\akD{\rt}\wedge \newline
							\hspace*{15mm}\nhop{\rt}{\rip}=\nhop{\rt}{\oip}\}}																\label{rreq:line16}
					\UPD{\rt:=\inv{\rt}{\dests}}																											\label{rreq:line18a}
					\UPD{\pre:=\bigcup\{\precs{\rt}{\rip}\,|\,(\rip,*)\in\dests\}}																\label{rreq:line17}
					\UPD{\dests:=\{(\rip,\rsn)\,|\,(\rip,\rsn)\in\dests\wedge \newline
							\hspace*{15mm}\precs{\rt}{\rip}\not=\emptyset}						\label{rreq:line17a}
					\groupcast{\pre}{\rerr{\dests}{\ip}}\,.\label{rreq:line18b}
                                        \aodvL{\ip}{\sn}{\rt}{\rreqs}{\queues}									\label{rreq:line19}
				\ENDPRIO		
	\ELSIF[this node is not the destination node]{$\dip\not=\ip$}																		\label{rreq:line20}			
			\PAR
\COMLINE{valid route to \dip\ that is fresh enough}
			\IF{$\dip\in\akD{\rt} \wedge \dsn \leq  \sqn{\rt}{\dip} \wedge\sqn{\rt}{\dip}\not=0$}								\label{rreq:line22}
					\COMLINE{update \rt\ by adding precursors}																				\label{rreq:line23}
					\UPD{\rt := \addprecrt{\rt}{\dip}{\{\sip\}}}																						\label{rreq:line24}
					\UPD{\rt := \addprecrt{\rt}{\oip}{\{\nhop{\rt}{\dip}\}}}																		\label{rreq:line25}
				\COMLINE{unicast a RREP towards the \oip\ of the RREQ}
				\STARTPRIO
					\unicast{\nhop{\rt}{\oip}}{\newline\hspace*{8.1mm}\rrep{\dhops{\rt}{\dip}}{\dip}{\sqn{\rt}{\dip}}{\oip}{\ip}}\,.						\label{rreq:line26}
					\aodvL{\ip}{\sn}{\rt}{\rreqs}{\queues}
				\PRIO
					\COMspec{if transmission fails, a RERR is generated}	
					\UPD{\dests:=\{(\rip,\inc{\sqn{\rt}{\rip}})\,|\,\rip\in\akD{\rt}\wedge \newline
							\hspace*{15mm}\nhop{\rt}{\rip}=\nhop{\rt}{\oip}\}}																\label{rreq:line28}
					\UPD{\rt:=\inv{\rt}{\dests}}																											\label{rreq:line30}					
					\UPD{\pre:=\bigcup\{\precs{\rt}{\rip}\,|\,(\rip,*)\in\dests\}}																\label{rreq:line29}
					\UPD{\dests:=\{(\rip,\rsn)\,|\,(\rip,\rsn)\in\dests\wedge \newline
							\hspace*{15mm} \precs{\rt}{\rip}\not=\emptyset}																	\label{rreq:line29a}
					\groupcast{\pre}{\rerr{\dests}{\ip}}\,. 																							\label{rreq:line31}
					\aodvL{\ip}{\sn}{\rt}{\rreqs}{\queues}
				\ENDPRIO
			\COMspeci{no valid route that is fresh enough}
			\ELSIF{$\dip\not\in\akD{\rt} \vee \sqn{\rt}{\dip} <  \dsn \vee\sqn{\rt}{\dip}=0$}										\label{rreq:line32}
				\COMLINE{no further update of \rt}
				\broadcast{\rreq{$\hops+1$}{\rreqid}{\dip}{$\newline\hspace*{20.4mm}\max(\sqn{\rt}{\dip},\dsn)$}{\oip}{\osn}{\ip}}\,.						\label{rreq:line34}
				\aodvL{\ip}{\sn}{\rt}{\rreqs}{\queues}	\label{rreq:line35}
			\ENDIFii
			\ENDPAR
		\ENDIFii
		\ENDPAR
	\ENDIFii

	\end{algorithmic}
    }
  \end{algorithm}
\end{minipage}%
\vspace{-5ex}%
\end{table}%
{
\renewcommand{\ip}{\dval{ip}}%
\renewcommand{\dip}{\dval{dip}}%
\renewcommand{\oip}{\dval{oip}}%
\renewcommand{\sip}{\dval{sip}}%
\renewcommand{\rip}{\dval{rip}}%
\renewcommand{\rt}{\dval{rt}}%
\renewcommand{\r}{\dval{r}}%
\renewcommand{\osn}{\dval{osn}}%
\renewcommand{\dsn}{\dval{dsn}}%
\renewcommand{\rsn}{\dval{rsn}}%
\renewcommand{\flag}{\dval{flag}}%
\renewcommand{\hops}{\dval{hops}}%
\renewcommand{\nhip}{\dval{nhip}}%
\renewcommand{\pre}{\dval{pre}}%
\renewcommand{\dests}{\dval{dests}}%
\renewcommand{\rreqid}{\dval{rreqid}}%
\renewcommand{\rreqs}{\dval{rreqs}}%
We assume that, as is the case for relevant wireless
technologies such as IEEE 802.11, unicast messages are acknowledged, and we therefore can determine whether
the transmission was unsuccessful and the link to the next node towards {\oip} is broken.
In this case, the node sends a RERR message to all nodes that rely on the broken link for one of their routes.
For this, we first determine which destination nodes are affected by the broken link, i.e., the nodes
that have this unreachable node listed as a next hop in the routing table (Line~\ref{rreq:line16}).
Here, the operator $\fninc$ increments the sequence numbers of those entries.
Then, we invalidate any affected \rtes (Line~\ref{rreq:line18a}), and determine the list of
\emph{precursors}, which are the
neighbouring nodes that have a
route to one of the affected destination nodes via the broken link (Line~\ref{rreq:line17}).
Finally, using the \awn \textbf{groupcast} primitive, a RERR message is sent via unicast to all these precursors
(Line~\ref{rreq:line18b}),\pagebreak[2]
listing only those invalidated destinations with a non-empty set of
precursors (Line~\ref{rreq:line17a}).

Lines~\ref{rreq:line20}--\ref{rreq:line35} deal with the case where the node receiving the RREQ is
not the destination, i.e., ${\dip}\mathop{\neq}{\ip}$ (Line~\ref{rreq:line20}).  The node can respond to the RREQ with a
corresponding RREP on behalf of the destination node \dip, if its route to {\dip} is ``fresh enough''
(Line~\ref{rreq:line22}).  This means that (a) the node has a valid route to \dip, (b) the destination sequence
number in the node's current \rte $(\sqn{\rt}{\dip})$ is greater than or equal to the
requested sequence number to {\dip} in the RREQ message, and (c) the sequence number is valid, i.e., it
is not unknown ($\sqn{\rt}{\dip} \mathord{\neq} 0$).  If these three conditions are met
(Line~\ref{rreq:line22}), the node generates a RREP message and unicasts it back to the originator
node {\oip} via the reverse route. To this end, it copies the sequence number for the
destination $\dip$ from the routing table $\rt$ into the destination sequence number field of the RREP message and it places its distance in hops from the destination ($\dhops{\rt}{\dip}$) in the corresponding field of the new reply
(Line~\ref{rreq:line26}).
 As usual, the unicast might fail, which
causes the same error handling (Lines~\ref{rreq:line28}--\ref{rreq:line29a}).
Just before unicasting the RREP message, the intermediate node updates the forward {\rte} to
$\dip$ by placing the last hop node ($\sip$)
into the precursor list for
that entry (Line~\ref{rreq:line24}).
Likewise, it updates the reverse {\rte} to {\oip} by placing the first hop $\nhop{\rt}{\dip}$
towards $\dip$ in the precursor list for that entry
(Line~\ref{rreq:line25}).

If the node is not the destination and there is either no route to the
destination $\dip$ inside the routing table or the route is not fresh enough,
the route request received has to be forwarded. This happens in Line~\ref{rreq:line34}.
The information inside the forwarded request is mostly copied from the request received.
Only the hop count is increased by $1$ and the destination sequence number is set
to the maximum of  destination sequence number in the RREQ packet
and the the current sequence number for $\dip$ in the routing table.
In case $\dip$ is an unknown destination, $\sqn{\rt}{\dip}$ returns the
unknown sequence number $0$.
}

\section{Ambiguities in the RFC}\label{sec:ambiguities}

The formal specification of AODV, outlined above and given in full detail in \cite{TR11},
closely follows the RFC 3561~\cite{rfc3561}, the official specification of the protocol.
However, the RFC contains several ambiguities and contradictions; an inventory is
presented in \cite{TR11}, and for each ambiguity or contradiction a number of ways to
resolve them is listed.  An \emph{interpretation} of the RFC is given by the allocation of a
resolution to each of the ambiguities and contradictions. Each reading, implementation,
or formal analysis of AODV must pertain to one of its interpretations. The formal
specification of AODV in \cite{TR11} constitutes one interpretation; the inventory
of ambiguities and contradictions is formalised by specifying each resolution
of each of the ambiguities and contradictions as a modification of this formal
specification, typically involving a rewrite of a few lines of code only.

A crucial contradiction in the RFC concerns the question of what would happen if a node has a
valid \rte for a destination $D$, with destination sequence number $n$, and an error
message is received from the next hop towards $D$, saying that the route to $D$ is broken,
and stating for this route a destination sequence number $m$, which may be smaller than $n$.
Section 6.11 of the RFC unambiguously states that in such a case the node updates its \rte
to $D$ by marking the route as invalid, and {\em copying} the destination sequence number from
the incoming route error message. However, Section 6.1 of the RFC states that if $m<n$,
any information related to $D$ in the AODV message must be discarded.

One can show \cite{TR11} that in case no node will ever store a \rte to itself (a self-entry), the above
situation will never occur. However, the RFC does not explicitly exclude self-entries\footnote{The Kernel-AODV, AODV-UIUC, AODV-UCSB and\newline AODV-ns2 implementations allow self-entries to occur.}, and
they can in fact occur \cite{TR11} in response to the standard handling of RREP messages.

The following ways to resolve this contradiction have been listed in \cite{TR11}: 
\begin{enumerate}[(a)]
\item\label{it:amb_i}
	Follow Section 6.11 of the RFC, in defiance of 6.1, i.e., {\em always\/} invalidate the \rte, and copy the destination sequence number from the error message to the corresponding entry in the routing table.%
	\footnote{It could be argued that this is not a reasonable interpretation of the RFC, since Section\ 6.1 should have priority over 6.11. However, this priority is not explicitly stated.}
\item\label{it:amb_ii}
	 Follow Section 6.11 only where it does not contradict 6.1, i.e., invalidate the \rte and copy the destination sequence number {\em only\/} if $m\geq n$.
\item\label{it:amb_iii}
	Always invalidate the \rte, but update the destination sequence number to $\max(m,n)$. 

\item\label{it:amb_iv}
	Always invalidate the \rte, but update the destination sequence number to $\max(m,n\mathord+1)$.
\item\label{it:amb_v}
	Invalidate the \rte and update the destination sequence number to $\max(m,n\mathord+1)$ only if $m\geq n$.%
	\footnote{The case $\max(m,n)$ if $m\geq n$ need no separate consideration, since it is equivalent to \eqref{it:amb_ii}.}
\item\label{it:amb_vi}
	Invalidate the \rte only if $m>n$.%
	\footnote{Here, it does not matter whether we copy, take $\max(m,n)$ or $\max(m,n\mathord+1)$; they are all equivalent.}
\item\label{it:amb_vii}
	Forbid self-entries; if an incoming RREP message would create a self-entry, discard that message.
\item\label{it:amb_viii}
	Forbid self-entries; if an incoming RREP message would create a self-entry, forward that message without updating the node's routing table.
\end{enumerate}
It should be noted that only resolutions \eqref{it:amb_i} and \eqref{it:amb_ii} are compliant with the RFC\@.
However, in \cite{ICNP12} we have shown that any interpretation based on resolutions \eqref{it:amb_i} or \eqref{it:amb_ii} gives rise to routing loops, so in order to arrive at a loop-free version of
AODV, one has to deviate from the RFC\@. Here, as in \cite{TR11}, we do so by choosing
resolution \eqref{it:amb_vi}.

The above is only one of many ambiguities; another one  is presented in Section~\ref{ssec:interpretation}.

\section{Loop freedom}
\label{sec:loopfreedom}

We now formalise loop freedom and sketch a proof that our
detailed specification of AODV cannot create routing loops.
We also show how such a formal proof can form a
baseline for evaluating variants of AODV\@---some of them will be loop free, others can yield loops. 

First we formalise what it means for the routing tables
established by AODV (our specification) to be free of loops.  Let $\IP$ be the set of network nodes and
$\dval{dip}\mathop{\in} \IP$ a particular destination; let $N$ be a
state of the network, encompassing
the current values of all variables maintained by all nodes.

\pagebreak
\mbox{}\vspace{-5.5ex} 

\noindent
The \emph{routing graph} $\RG{N}{\dval{dip}}$ for destination \dval{dip} in state $N$ is the
directed graph $(\IP,E)$ with set of vertices $\IP$ and set of edges $E \subseteq \IP{\times}\IP$
consisting of the pairs $({\dval{ip}},{\dval{ip}'})$ such that
$\dval{ip}\mathop{\not=}\dval{dip}$ and
$(\dval{dip},*,\val,*,\dval{ip}'\!,*)$
occurs in the routing table of \dval{ip} in state $N$.
Thus, there is an  edge $(\dval{ip}, \dval{ip}')$ if node \dval{ip} is not the destination
\dval{dip}, but has a valid entry for \dval{dip}, and $\dval{ip}'$ is the
next hop according to that entry.
Loops in directed graphs are defined to be paths following edges which return to a vertex.

A (network) state $N$ is \emph{loop free} if the
routing graphs $\RG{N}{\dval{dip}}$ are loop free for all
$\dval{dip}\mathop{\in}\IP$.  The specification of AODV is
\emph{loop free} iff all reachable
states are loop free.

Let us now turn towards a proof of loop freedom for AODV$\!$\@.
It relies on a number of
\emph{invariants}---statements that hold for all reachable
states of our model.  An invariant
is usually verified by showing that it holds
for all possible initial states, and that, for any transition
\plat{$N\ar{\ell}N'$} derived by our operational semantics
\cite{TR11}, if it holds for state $N$ then it also holds for state $N'\!$, reached after performing some action $\ell$.
These transitions can be traced back to the line numbers in our process
declarations {\AODV}, {\PKT}, {\RREQ}, {\RREP}, {\RERR} and {\QMSG}.

A proof of AODV's loop freedom using invariants has first been proposed in \cite{BOG02}.\footnote{In fact, the
  same idea occurs already in \cite{AODV99}, but without the formalisation in terms of
  invariants.  However, that proof fails to consider some cases that do
  occur in AODV and might yield routing loops \cite{ICNP12}.}
The main invariant of~\cite{BOG02}  states that
\emph{if node \dval{ip} has a \rte for destination \dval{dip} with next hop \dval{nhip},
then also node \dval{nhip} has a routing table entry for \dval{dip}, and the latter has a
larger destination sequence number, or an equal one with a strictly smaller hop count.}

This invariant is claimed to hold regardless whether the \rtes for \dval{dip} at \dval{ip}
and \dval{nhip} are marked as valid or invalid. Nevertheless, the following example shows
that it does not hold for the current version of AODV\@.

\begin{figure}[h]
\centerline{
\includegraphics[scale=1]{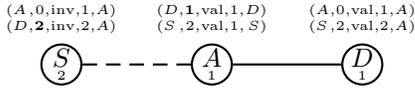}
}
\caption{Sequence numbers can go down\protect\footnotemark}
\label{fg:sundown}
\end{figure}
\footnotetext{We omit the precursor set; the
\rtes in each node are represented as $(\dval{dip},
\dval{dsn}, \dval{flag},\dval{hops}, \dval{nhip})$, described in Section~\ref{sec:aodv}.}

The network state depicted in Figure~\ref{fg:sundown} can occur when, after a standard RREQ-RREP cycle, a link break is detected. 
We assume a simple linear topology of $3$ nodes. Below the node names, the circles list the nodes' own sequence numbers, which we assume to be $1$ at the initial state.
The example starts with empty routing tables and node $S$ searching for a route to node $D$. 
Before broadcasting a RREQ message, node $S$ increments its sequence number by $1$. 
Due to the successful exchange of RREQ-RREP messages, all routing tables are updated.
After that, the link between nodes $S$ and $A$ goes down, $S$ detects the link break
and updates its routing table: it sets all entries in its
 routing table with next hop $A$ to {\em invalid} and increases the
destination sequence number of the route to $D$ to $2$; the destination
sequence number in the route to $A$ is unknown ($0$) and hence not incremented. Now the
destination sequence number on the route to node $D$ is $2$ in the routing table of
$S$ and $1$ inside $A$'s routing table, hence this number can  go down.

One way to avoid this problem is to claim the invariant only for the case that the \rtes
at \dval{ip} and \dval{nhip} are both marked as valid. This is what we do in
Theorem~\ref{thm:inv_a} below, and it suffices to obtain loop freedom of AODV\@.
However, in order to prove Theorem~\ref{thm:inv_a}, we 
need an invariant that also
takes invalid \rtes into account (cf.\ Proposition~\ref{prop:inv_nsqn}), so it is not
possible to avoid the above problem altogether.

\newcommand{\hopsc}{\dval{hops}_c}
\newcommand{\oipc}{\dval{oip}_{\hspace{-1pt}c}}
\newcommand{\rreqidc}{\dval{rreqid}_{c}}
\newcommand{\osnc}{\dval{osn}_c}
\newcommand{\ipc}{\dval{ip}_{\hspace{-1pt}c}}
\renewcommand{\kno}{\keyw{kno}}
\renewcommand{\unkno}{\keyw{unk}}
\newcommand{\destsc}{\dval{dests}_c}
\newcommand{\Prop}[1]{Proposition~\ref{prop:#1}}
\newcommand{\Eq}[1]{\eqref{eq:#1}}
\renewcommand{\kd}[2][N]{\fnkD_{#1}^{\ensuremath{#2}}}
\renewcommand{\akd}[2][N]{\fnakD_{#1}^{\ensuremath{#2}}}
\newcommand{\decremented}{\mathbin{\stackrel{\bullet}{\raisebox{0pt}[2pt]{$-$}}}1}
To compensate for the increase of a sequence number in case of route invalidation, we introduce the concept of  a \emph{net sequence number} of a route  to
$\dval{dip}$ according to the routing table of node $\dval{ip}$ in state
$N$, which combines ``freshness'' and validity:\vspace{-3ex}%

{\small
\[\begin{array}{@{}r@{\hspace{0.1em}}c@{\hspace{0.1em}}l@{}}
\nsq{\dval{ip}}&:=&\left\{
\begin{array}{@{\hspace{-0.1em}}l@{}}
\plat{$\sq{\dval{ip}}$}\ \ \mbox{if }\plat{$\sta{\dval{ip}}$}\mathop=\val \mathop\vee \sq{\dval{ip}}\mathop=0\\[1mm]
\plat{$\sq{\dval{ip}}$}\mathord-1\ \ \mbox{otherwise}\ .
\end{array}
\right.
\end{array}\]
}%
Here, we write $\sq{\dval{ip}}$ for \plat{$\sqn{\dval{rt}}{\dip}$} in case \dval{rt}
happens to be the routing table maintained by a node with IP address \dval{ip} in state $N$ of the
network. Likewise \plat{$\sta{\dval{ip}}$}
denotes the validity of the route from \dval{ip} to \dval{dip} according to the routing table of
\dval{ip} in state $N$, \plat{$\dhp{\dval{ip}}$} its hop count, and \plat{$\nhp{\dval{ip}}$} its
next hop.  Furthermore $\D{\dval{ip}}$, abbreviating $\kD{\dval{rt}}$, is the set of destinations for which there is
a valid entry in the routing table of \dval{ip}.

In this section we state the key theorems and sketch some of the proofs;
all details can be found in~\cite{TR11}. In particular, we show
only proofs w.r.t.\ Process~\ref{pro:rreq_short}, the RREQ handling,
and the displayed portion of Process~\ref{pro:aodv_short}.

\begin{prop}\label{prop:starcast_ii}
If a route request is sent (forwarded) by a node $\ipc$ different from
the originator of the request then the content of $\ipc$'s routing
table must be fresher or at least as good as the information inside
the message.
\vspace{-1mm}
\[
\begin{array}{@{\hspace{-4mm}}r@{}c@{\hspace{0.3em}}l@{\hspace{-4mm}}}
	  &&N\ar{R:\textbf{broadcast}(\rreq{\hopsc}{*}{*}{*}{\oipc}{\osnc}{\ipc})}N'\wedge\ipc\mathop{\not=}\oipc\\
	  &\Rightarrow&
	  \oipc\in\D{\ipc}
	  \wedge\big(\sq[\oipc]{\ipc}\mathop>\osnc
	  \vee (\sq[\oipc]{\ipc}\mathop=\osnc\\
	  &&\ans \dhp[\oipc]{\ipc}\mathop\leq\hopsc\wedge \sta[\oipc]{\ipc}\mathop=\val)\big)
\end{array}
\]\end{prop}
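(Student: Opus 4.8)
The plan is to treat \Prop{starcast_ii} like the other invariants of the model: since it asserts something about a transition rather than about a state, I would verify it directly against every operational rule that can produce a label $N\ar{R:\textbf{broadcast}(\rreq{\hopsc}{*}{*}{*}{\oipc}{\osnc}{\ipc})}N'$. Scanning the six process declarations, a RREQ is put on the air in only two situations: when a node starts a route discovery of its own --- in which case it is the originator and the sender, so $\ipc=\oipc$ --- and when a received request is forwarded at Line~\ref{rreq:line34} of $\RREQ$. The first situation makes the premise $\ipc\neq\oipc$ false and is vacuous, so everything reduces to the forwarding step at Line~\ref{rreq:line34}.

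For that step I would first read off the outgoing message: Line~\ref{rreq:line34} copies the originator and its sequence number from the incoming request, so these are $\oipc$ and $\osnc$; the new hop count is $\hopsc=\dval{hops}+1$, where $\dval{hops}$ is the hop count of the received request, and the sender field is $\ipc=\dval{ip}$. To reach Line~\ref{rreq:line34} control must have gone through the ``new request'' branch, hence through Line~\ref{rreq:line6}, where the routing table became $\upd{\dval{rt}}{(\oipc,\osnc,\val,\hopsc,\dval{sip},\emptyset)}$, and it is not changed again before Line~\ref{rreq:line34} (Line~\ref{rreq:line8} touches only $\dval{rreqs}$, Lines~\ref{rreq:line20} and~\ref{rreq:line22} are guards, and the table--modifying Lines~\ref{rreq:line24}--\ref{rreq:line25} are on the other branch). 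So in $N$ (and in $N'$, since a broadcast leaves routing tables untouched) the table of $\ipc$ is exactly that updated table, and it remains to inspect $\fnupd$ clause by clause on $r=(\oipc,\osnc,\val,\hopsc,\dval{sip},\emptyset)$. Every clause leaves an entry for $\oipc$ in the table, giving $\oipc\in\D{\ipc}$. In the clauses ``$r$ is new'', ``fresher'', ``shorter'' and ``replaces invalid'' the surviving entry is $r$ (with precursors possibly appended), so $\sq[\oipc]{\ipc}=\osnc$, $\dhp[\oipc]{\ipc}=\hopsc$, $\sta[\oipc]{\ipc}=\val$ --- the second disjunct. The ``unknown sequence number'' clause requires $\osnc=0$ and cannot occur, since originator sequence numbers carried in requests are at least $1$. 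In the ``otherwise'' clause the pre-existing entry for $\oipc$ is kept, and negating the earlier clauses forces $\sqn{\dval{rt}}{\oipc}\geq\osnc$ and, in the equality case, $\dhops{\dval{rt}}{\oipc}\leq\hopsc$ and $\status{\dval{rt}}{\oipc}=\val$, so either $\sq[\oipc]{\ipc}>\osnc$ or the second disjunct holds.

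The point that needs care --- and the real work --- is the ``otherwise'' clause when the retained entry is invalid and strictly fresher than $\osnc$: only the first disjunct is then available, and whether that closes the case depends on how $\D{\ipc}$ is read. Under the reading ``$\ipc$ has an entry for $\oipc$'' the analysis above is complete; but to obtain, in all branches, the validity of $\ipc$'s entry --- which is what Theorem~\ref{thm:inv_a} will use --- one has to exclude a retained invalid entry whose sequence number exceeds $\osnc$, which is precisely what the net--sequence--number invariant \Prop{inv_nsqn} controls. Accordingly I would prove \Prop{starcast_ii} not in isolation but as one conjunct of the simultaneous induction over reachable states that also carries \Prop{inv_nsqn} and the remaining invariants, invoking their induction hypotheses in $N$ for this sub-case. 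The bulk of the genuine effort is therefore bookkeeping: certifying that Line~\ref{rreq:line34} is the only relevant broadcast site, that the routing table seen there is exactly the one produced at Line~\ref{rreq:line6}, and that the companion invariants kill the invalid-entry sub-case; the $\fnupd$ case split itself is mechanical.
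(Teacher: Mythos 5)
Your proposal is correct and follows essentially the route the paper takes (spelled out in full only in the technical report it cites): enumerate the places where a RREQ can be broadcast, dismiss the initiation case as vacuous since there $\ipc=\oipc$, and for the forwarding at Line~\ref{rreq:line34} observe that the routing table in the sending state is exactly the one produced by the update at Line~\ref{rreq:line6}, then argue clause by clause about $\fnupd$. Your final paragraph, however, adds an unnecessary and slightly backwards complication: the conclusion's first disjunct claims nothing about validity, and the fourth clause of $\fnupd$ already guarantees that an entry retained with sequence number equal to $\osnc$ is valid, so the direct case analysis is complete (even the unknown-sequence-number clause satisfies the invariant outright, so your auxiliary fact $\osnc\geq 1$ is not really needed); moreover, making \Prop{starcast_ii} depend on \Prop{inv_nsqn} would invert the paper's dependency order, since \Prop{inv_nsqn} is itself proved by invoking \Prop{starcast_ii}, and no simultaneous induction is required.
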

The next proposition says that in the routing table of a node \dval{ip},
an entry to a destination \dval{dip} will never be deleted, and
the net sequence number of the entry will never go down.
\begin{prop}\label{prop:7.23}
Let $\dval{ip},\dval{dip}\mathbin\in\IP$, and assume $N \ar{\ell}N'\!$, i.e.\ the network
proceeds from state $N$ to state $N'$ by the occurrence of some action $\ell$.
If $\dval{dip}\in\kd{\dval{ip}}$
then $\dval{dip}\in\kd[N']{\dval{ip}}$
and $\nsq{\dval{ip}}\leq \fnnsqn_{N'}^{\dval{ip}}(\dval{dip})$.
\end{prop}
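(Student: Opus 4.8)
\noindent
The plan is a one-transition argument: the proposition concerns a single step $N\ar{\ell}N'$, so no reachability analysis is needed, only a case distinction on the action $\ell$. First I would observe that a transition can change the routing table of $\dval{ip}$ only when it is $\dval{ip}$ itself that executes a line of one of $\AODV$, $\PKT$, $\RREQ$, $\RREP$, $\RERR$, $\QMSG$; for every other transition $\dval{rt}$ at $\dval{ip}$ is untouched and the claim is immediate. Scanning the processes, the only operations that write to a routing table are $\fnupd(\dval{rt},\dval{r})$ (e.g.\ Lines~\ref{rreq:line6}, \ref{rreq:line24}, \ref{rreq:line25} of Process~\ref{pro:rreq_short}, and analogues in the other processes) and route invalidation, i.e.\ an assignment $\dval{rt}:=\fninv(\dval{rt},\dots)$ whose second argument is produced using $\fninc$ (Lines~\ref{rreq:line16}, \ref{rreq:line18a}, and analogues). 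Neither deletes an entry: $\fninv$ only flips validity flags and changes sequence numbers, and each clause of $\fnupd$ keeps an entry for every destination already present (possibly adding one more). This settles the first conjunct $\dval{dip}\in\kd{\dval{ip}}\Rightarrow\dval{dip}\in\kd[N']{\dval{ip}}$ at once, and reduces the second conjunct to checking that each of these two operations leaves the net sequence number of $\dval{dip}$'s entry at $\dval{ip}$ non-decreasing.

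For $\fnupd(\dval{rt},\dval{r})$ I would argue as follows. If $\dval{dip}\neq\pi_{1}(\dval{r})$ the entry for $\dval{dip}$ is unchanged up to precursors, so nothing happens to its net sequence number. If $\dval{dip}=\pi_{1}(\dval{r})$ then, since $\dval{dip}$ is already in $\kD{\dval{rt}}$, the first (``new'') clause of $\fnupd$ cannot apply; in the ``otherwise'' clause the stored entry survives verbatim up to added precursors; and in each of the remaining four clauses $\fnupd$ installs a \emph{valid} entry whose sequence number is $\geq\sqn{\dval{rt}}{\dval{dip}}$ (strictly larger in the ``fresher'' clause, equal in the ``shorter'', ``replaces-invalid'' and ``unknown-sequence-number'' clauses), so its net sequence number equals that sequence number, which is $\geq\sqn{\dval{rt}}{\dval{dip}}\geq$ the old net sequence number. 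The only ingredient here not purely local to $\fnupd$ is the fact — immediate from inspecting the code — that every call to $\fnupd$ in our model supplies an entry $\dval{r}$ whose validity flag is the constant $\val$; without it the ``shorter'' clause could replace a valid entry by an invalid one of the same sequence number and so lower the net sequence number.

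For route invalidation I would use the definition of the net sequence number, which subtracts $1$ exactly for invalid entries with a positive sequence number. Entries not being invalidated are untouched; an entry for $\dval{dip}$ that is invalidated becomes invalid with some sequence number $\dval{rsn}$, and $\dval{rsn}$ arises in only two ways. After a detected link break, $\fninc$ takes $\dval{rsn}$ to be the current sequence number $n$ of the entry plus $1$ (leaving an unknown $0$ at $0$); the entry's net sequence number thus moves from $n$ (if it was valid) or $n\mathord-1$ (if it was already invalid) to $n$, or stays at $0$, which is never a decrease. When the invalidation is caused by an incoming RERR message carrying a sequence number $m$, we follow resolution~\eqref{it:amb_vi}: the entry is invalidated only if $m>n$, and its sequence number is then set to $m$ (the options $m$, $\max(m,n)$, $\max(m,n\mathord+1)$ coinciding since $m\geq n\mathord+1$); hence its net sequence number becomes $m\mathord-1\geq n\geq$ its old net sequence number (the old value being at most $n$ in all cases). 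Combining the $\fnupd$ and invalidation cases finishes the proof.

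The part I expect to be delicate is the arithmetic bookkeeping around the ``$\mathord-1$'' in the net-sequence-number definition versus the ``$\mathord+1$'' performed on invalidation: one has to check these cancel exactly, and that the special treatment of the unknown sequence number $0$ — never incremented, yet assigned net sequence number $0$ even when invalid — keeps the inequality valid at that boundary. Everything else is routine: enumerating every line of $\AODV$, $\PKT$, $\RREQ$, $\RREP$, $\RERR$ and $\QMSG$ that writes to a routing table and confirming that it is an application of $\fnupd$ with a valid entry, or an invalidation with a correctly chosen sequence number; the per-case verification is immediate. (In this paper we display the argument only for Process~\ref{pro:rreq_short} and the shown part of Process~\ref{pro:aodv_short}; the remaining processes are treated in \cite{TR11}.)
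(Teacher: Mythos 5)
Your proposal is correct and follows essentially the same route as the paper, which disposes of the claim in two lines by noting that the only routing-table operations are $\fnupd$, $\fnaddprecrt$ and $\fninv$, and that none of them deletes an entry or decreases a net sequence number; your case analysis of the $\fnupd$ clauses and of invalidation (including the explicit reliance on resolution~\eqref{it:amb_vi} for RERR-triggered invalidation, which the paper flags as essential) is just the worked-out version of that sketch, with the precursor-only updates you fold into $\fnupd$ playing the role of $\fnaddprecrt$.
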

\begin{proof}
In our model of AODV, the only way to change a routing table is by means of the operations
$\fnupd$, $\fnaddprecrt$ and $\fninv$. None of them ever deletes an entry altogether, or
decreases its net sequence number.
\end{proof}
In this proof it is essential that we use resolution~\eqref{it:amb_vi} of
the ambiguity in the RFC presented in Section~\ref{sec:ambiguities}.
\Prop{7.23} would not hold under resolutions (a), (b) or (c).

\begin{prop}\rm
\label{prop:inv_nsqn}
If, in a reachable network state $N\!$, a node $\dval{ip}\mathop\in\IP$ has a
routing table entry to $\dval{dip}$, then also the next hop
\dval{nhip} towards \dval{dip}, if not \dval{dip} itself, has a
routing table entry to $\dval{dip}$, and the net sequence number of
the latter entry is at least as large as that of the former.
\[
\begin{array}{@{\hspace{-4mm}}r@{}c@{\hspace{0.3em}}l@{\hspace{-4mm}}}
	  &&\dval{dip}\in\D{\dval{ip}}\ans\dval{nhip}\not=\dval{dip}\\
&\Rightarrow&\dval{dip}\in\D{\dval{nhip}} \ans \nsq{\dval{ip}}\leq \nsq{\dval{nhip}}\ ,
\end{array}\]
where $\dval{nhip}:=\nhp{\dval{ip}}$ is the IP address of the next hop.
\end{prop}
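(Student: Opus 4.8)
The plan is to prove this as an invariant by induction over the reachable states, following the standard recipe: check it holds in all initial states, then verify it is preserved by every transition $N\ar{\ell}N'$. In an initial state all routing tables are empty, so the premise $\dval{dip}\in\D{\dval{ip}}$ is vacuously false and the invariant holds trivially. For the inductive step, I would classify the transitions according to which line of which process they stem from, and note that the only transitions that can affect the truth of the statement are those that modify some node's routing table, i.e.\ applications of $\fnupd$, $\fnaddprecrt$ or $\fninv$; all other transitions leave both sides unchanged and the claim follows immediately from the induction hypothesis. Transitions using $\fnaddprecrt$ only add precursors and change neither validity, sequence numbers, hop counts nor next hops, so they too are harmless.

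So the real work is the $\fnupd$ and $\fninv$ cases. For $\fninv$ (route invalidation, e.g.\ Lines~\ref{rreq:line16}--\ref{rreq:line18a}): here we use resolution~\eqref{it:amb_vi}, under which invalidation of an entry to \dval{dip} only happens when the incoming sequence number strictly exceeds the stored one, and by \Prop{7.23} the net sequence number never decreases — so an invalidation at \dval{ip} can only keep $\nsq{\dval{ip}}$ the same or raise it, while if the next hop node \dval{nhip} is the one whose table changes, its net sequence number likewise cannot fall; combined with the induction hypothesis the inequality survives. The subtle point is that invalidation can change the \emph{next hop} only by leaving it fixed (invalidation retains \dval{nhip}), so the edge under consideration is stable. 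The genuinely delicate case is $\fnupd$, and within it the sub-case where a \emph{new} or \emph{updated} entry to \dval{dip} is installed at \dval{ip} with a fresh next hop \dval{nhip} — this is exactly what happens when processing an RREQ (reverse route to \oip, Line~\ref{rreq:line6}) or an RREP (forward route to \dip). Here I would invoke the message-invariant \Prop{starcast_ii} (and its RREP analogue, which is in \cite{TR11}): the node \dval{sip} that sent the control message must itself have had a routing table entry to the relevant destination whose net sequence number is at least as large as the value carried in the message, and the value installed at \dval{ip} is at most that carried value; moreover \dval{sip} becomes the new next hop \dval{nhip}, and by \Prop{7.23} \dval{sip}'s entry has only improved since it sent the message. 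Chaining these inequalities gives $\nsq{\dval{ip}} \leq \nsq{\dval{nhip}}$.

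The main obstacle I anticipate is bookkeeping the interaction between the net sequence number definition and the precise value $\fnupd$ installs, especially the ``unknown sequence number'' branch ($\pi_2(r)=0$) and the branch where $\fnupd$ keeps the \emph{old} sequence number $\nr'$: in these cases the installed entry may have a different sequence number than the message suggests, so one must check that the net sequence number of the resulting entry is still bounded by that of the next hop. A second, related subtlety is that when the message is processed, the sender \dval{sip} might have invalidated or further updated its own entry in the meantime, or \dval{sip} might coincide with \dval{dip} itself (the $\dval{nhip}\not=\dval{dip}$ guard in the statement handles the latter); \Prop{7.23} is precisely the tool that rescues the former. I expect the proof to reduce, after the case split, to a handful of short arithmetic checks of the form ``$\nsq{}$ of the new local entry $\leq$ value in message $\leq \nsq{}$ of the sender's (possibly improved) entry'', each justified by \Prop{starcast_ii}, its RREP counterpart, \Prop{7.23}, and the definition of $\fnnsqn$.
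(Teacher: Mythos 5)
Your overall strategy coincides with the paper's: induction over reachable states, with empty initial routing tables; dismissal of all transitions that do not modify a routing table; precursor additions and any modification of \dval{nhip}'s own table dismissed via Proposition~\ref{prop:7.23}; and the genuinely new-entry case (Pro.~\ref{pro:rreq_short}, Line~\ref{rreq:line6}, and its RREP analogue) settled by chaining the message invariant (Proposition~\ref{prop:starcast_ii}, resp.\ its RREP counterpart) with Proposition~\ref{prop:7.23} and the definition of net sequence numbers. That part matches the paper essentially step for step (the paper additionally invokes two auxiliary facts from the technical report---that a received message was indeed sent earlier, and that its sender field is genuine---which you use tacitly), and the case where the installed entry is $(\dval{sip},0,\val,1,\dval{sip},\emptyset)$ is dispatched, as you hint, because there $\dval{nhip}=\dval{dip}$ and the antecedent fails.

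However, your treatment of the $\fninv$ case has a genuine logical gap. You argue that by Proposition~\ref{prop:7.23} the value $\nsq{\dval{ip}}$ ``can only stay the same or be raised'', and that ``combined with the induction hypothesis the inequality survives''. That inference is invalid: the induction hypothesis bounds the \emph{old} value of $\nsq{\dval{ip}}$ by $\nsq{\dval{nhip}}$, so an increase of the left-hand side is exactly what could destroy the invariant; monotonicity of net sequence numbers helps only on the \dval{nhip} side, not on the \dval{ip} side. The paper's argument for the invalidations actually displayed (Lines~\ref{rreq:line18a} and~\ref{rreq:line30} of Pro.~\ref{pro:rreq_short}) is more specific: these are triggered by a failed unicast (not by a RERR, so resolution~\eqref{it:amb_vi} is not what licenses them, and your premise that invalidation only occurs when an incoming sequence number strictly exceeds the stored one is false here); the next hop of the entry is unchanged, and by Lines~\ref{rreq:line16} and~\ref{rreq:line28} the destination sequence number is incremented by $\fninc$ at the very moment the entry is marked invalid, so the \emph{net} sequence number is exactly unchanged and the pre-transition invariant carries over verbatim. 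For RERR-triggered invalidation (in the process {\RERR}, not displayed here), where under resolution~\eqref{it:amb_vi} the value $\nsq{\dval{ip}}$ genuinely can rise, one needs an additional RERR-message invariant (the RERR analogue of Proposition~\ref{prop:starcast_ii}, available in the technical report) relating the sequence number carried in the RERR to the entry stored at its sender, who is the next hop; the induction hypothesis alone cannot close that case. Resolution~\eqref{it:amb_vi} does enter the proof, but through Proposition~\ref{prop:7.23}, not as a substitute for these arguments.
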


\begin{proof}
In the initial state, the invariant holds  since all routing tables are empty (cf.\ Section~\ref{sec:aodv}).
Next we assume that the property holds and check each line in
Pro.~\ref{pro:aodv_short} and
Pro.~\ref{pro:rreq_short} which could invalidate it.

A modification of the routing table of \dval{nhip} is harmless, as it
can only increase $\kd{\dval{nhip}}$
as well as $\nsq{\dval{nhip}}$ (cf.\ \Prop{7.23}).
  
Adding precursors to routes of \dval{ip} does not harm since the 
invariant does not depend on precursors.
It remains to examine all calls of $\fnupd$ and $\fninv$ to the routing table of \dval{ip}.
Without loss of generality we restrict attention to those applications of $\fnupd$
or $\fninv$ that actually modify the entry for \dval{dip}, beyond its
precursors; if $\fnupd$ only adds some precursors in the routing
table, the invariant---which is assumed to hold before---is maintained.

\begin{description}
  \itemsep=0mm
\item[Pro.~\ref{pro:aodv_short}, Lines~\ref{aodv:line10}, \ref{aodv:line14}, \ref{aodv:line18}:]
	The entry $(\sip,0,\val,1,\sip,\emptyset)$ is used for the update; 
	its destination is $\dval{dip}:=\sip$.
        We assume this entry is actually inserted in the routing table of \dval{ip}.
	Since $\dval{dip}=\sip=\nhp[\dval{dip}]{\dval{ip}}=\dval{nhip}$,
	the antecedent of the invariant to be proven is not satisfied.
\item[Pro.~\ref{pro:rreq_short}, Line~\ref{rreq:line6}:] 
        The entry $(\oip,\osn,\val,\hops\mathord+1,\sip,*)$ is used for the update; again
        we assume it is inserted into the routing table of node \dval{ip}.
	So $\dval{dip}:=\oip$, $\dval{nhip}:=\sip$,
        $\nsq{\dval{ip}}:=\osn$ and $\dhp{\dval{ip}}:=\hops\mathord+1$.\linebreak[2]
        This information is distilled from a received
	route request message (cf.\ Lines~\ref{aodv:line2} and~\ref{aodv:line8}
	of Pro.~\ref{pro:aodv_short}).
	By Proposition~7.1 of \cite{TR11}, this message was sent before, say in state $N'$;
        by Proposition~7.8 of \cite{TR11}, the sender of this message has identified itself
        correctly, and is $\sip$.

	By \Prop{starcast_ii}, with $\ipc\mathbin{:=}\sip\mathbin=\dval{nhip}$,
	~$\oipc\mathbin{:=}\oip\mathbin=\dval{dip}$, ~$\osnc\mathbin{:=}\osn$~ and
	~$\hopsc\mathbin{:=}\hops$,
        and using that $\ipc = \dval{nhip} \neq \dval{dip} = \oipc$, we get that
	\plat{$\dval{dip}\in \kd[N']{\dval{nhip}}$} and
$$\begin{array}{@{}c@{}}
\fnsqn_{N'}^{\dval{nhip}}(\dval{dip})=\fnsqn_{N'}^{\ipc}(\oipc) > \osnc = \osn\ , \mbox{ or}\\
\fnsqn_{N'}^{\dval{nhip}}(\dval{dip})=\osn \ans
 \fnstatus_{N'}^{\dval{nhip}}(\dval{dip})=\val\ .
\end{array}$$
We first assume that the first line holds.
Then, by \Prop{7.23},
\[
\nsq{\dval{nhip}}
\begin{array}[t]{@{}l@{}}
\mbox{}\geq
\fnnsqn_{N'}^{\dval{nhip}}(\dval{dip})
\geq
\fnsqn_{N'}^{\dval{nhip}}(\dval{dip})\mathord-1 \\
\mbox{} \geq \osn=\nsq{\dval{ip}}\ .
\end{array}
\]
We now assume the second line to be valid. 
From this we conclude
$$\nsq{\dval{nhip}}
\begin{array}[t]{@{}l@{}}
\mbox{}\geq
\fnnsqn_{N'}^{\dval{nhip}}(\dval{dip})
=\fnsqn_{N'}^{\dval{nhip}}(\dval{dip}) \\
\mbox{} =\osn=\nsq{\dval{ip}}\ .
\end{array}$$\vspace{-2mm}
\item[Pro.~\ref{pro:rreq_short}, Lines~\ref{rreq:line18a}, \ref{rreq:line30}:]
In these applications of $\fninv$, the next hop \dval{nhip} is not changed.
Since the invariant has to hold before the execution, it follows that 
$\dval{dip}\in\kd{\dval{nhip}}$ also holds after execution.
Furthermore, in view of Lines~\ref{rreq:line16} and~\ref{rreq:line28},
the route is invalidated while the destination sequence number is incremented.
For this reason the net sequence number stays the same, and the invariant is maintained.
\qed
\end{description}
\end{proof}

\begin{theorem}\label{thm:inv_a}
If, in a state $N$, a node $\dval{ip}\mathop\in\IP$ has a valid entry to
$\dval{dip}$, and the next hop is not \dval{dip} and has a valid $\dval{dip}$-entry as well,
then the latter entry has a larger destination sequence number or an equal one with a
smaller hop count.\\[2mm]
\centerline{$
\begin{array}{@{\hspace{-4mm}}r@{}c@{~}l@{\hspace{-4mm}}}
	&&\dval{dip}\in\akd{\dval{ip}}\cap \akd{\dval{nhip}} \ans\dval{nhip}\not=\dval{dip} \\
	&\Rightarrow&
	\sq{\dval{nhip}}>\sq{\dval{ip}}\vee \big(\sq{\dval{nhip}}\mathbin=\sq{\dval{ip}}\wedge\mbox{}\\
        &&\dhp{\dval{nhip}}\mathbin<\dhp{\dval{ip}}\big)\ ,
\end{array}$
}\\[1mm]
where $\dval{nhip}:=\nhp{\dval{ip}}$, the next hop in the \rte at \dval{ip} for the route to ${\dval{dip}}$.
\end{theorem}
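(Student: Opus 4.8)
The plan is to prove the statement as an invariant, by induction on reachable states: it holds vacuously in every initial state, where all routing tables are empty and so the antecedent fails, and one must show it is preserved by every transition $N\ar{\ell}N'$. A first simplification is that the sequence-number half of the conclusion requires no induction at all: since net and ordinary sequence numbers agree on valid entries, \Prop{inv_nsqn}, applied directly in $N'$, already gives that $\dval{nhip}$'s destination sequence number for $\dval{dip}$ is at least $\dval{ip}$'s, whenever both entries are valid and $\dval{nhip}\not=\dval{dip}$. Hence the only genuinely inductive content is the hop-count claim in the boundary case where the two sequence numbers are equal.

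For the step I would inspect every transition that can alter a $\dval{dip}$-entry of $\dval{ip}$ or of $\dval{nhip}$. By the structure of the model these are exactly the applications of $\fnupd$ and $\fninv$ --- applications of $\fnaddprecrt$, and indeed any change touching only precursors, leave sequence numbers, hop counts, next hops and validity alone, and the statement ignores precursors --- and, as elsewhere in the paper, it suffices to examine Process~\ref{pro:rreq_short} and the displayed part of Process~\ref{pro:aodv_short}. Transitions modifying $\dval{ip}$'s entry split into three: the self-entry updates of Process~\ref{pro:aodv_short}, Lines~\ref{aodv:line10},~\ref{aodv:line14},~\ref{aodv:line18}, install a next hop equal to $\dval{dip}$ and so falsify the antecedent (just as in the proof of \Prop{inv_nsqn}); the $\fninv$ calls of Process~\ref{pro:rreq_short}, Lines~\ref{rreq:line18a},~\ref{rreq:line30}, turn $\dval{ip}$'s entry invalid and likewise falsify it; and the reverse-route update of Process~\ref{pro:rreq_short}, Line~\ref{rreq:line6}, is the essential case. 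If instead only $\dval{nhip}$'s entry changes while $\dval{ip}$'s stays valid and unchanged, then $\dval{nhip}$ already had a $\dval{dip}$-entry in $N$ (by \Prop{inv_nsqn} in $N$), and combining \Prop{inv_nsqn} in $N$ with the fact that neither $\fnupd$ nor $\fninv$ ever lowers a sequence number forces, in the equal-sequence-number regime, $\dval{nhip}$'s sequence number to be unchanged and --- via \Prop{inv_nsqn} again, since an invalid entry would carry a strictly smaller net sequence number --- its entry to have been valid in $N$ already; the induction hypothesis then applies, and the ``shorter'' branch of $\fnupd$ can only decrease its hop count further.

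It remains to treat Line~\ref{rreq:line6}, where $\dval{ip}$ installs a valid $\dval{dip}$-entry with $\dval{dip}=\oip$, next hop $\dval{nhip}=\sip$, hop count $\hops\mathord+1$ and sequence number $\osn$, distilled from a received route request which, by the same argument as in the proof of \Prop{inv_nsqn} (Propositions~7.1 and~7.8 of \cite{TR11}), was forwarded by $\sip$ in some earlier reachable state. Instantiating \Prop{starcast_ii} there with $\ipc\mathbin{:=}\sip\mathbin=\dval{nhip}$, $\oipc\mathbin{:=}\oip\mathbin=\dval{dip}$, $\osnc\mathbin{:=}\osn$ and $\hopsc\mathbin{:=}\hops$, one learns that at that earlier state $\dval{nhip}$ had an entry for $\dval{dip}$ whose sequence number either exceeded $\osn$, or equalled $\osn$ with valid status and hop count at most $\hops$. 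In the first case, sequence-number monotonicity makes $\dval{nhip}$'s sequence number in $N'$ strictly exceed $\osn$, which is $\dval{ip}$'s, so the strict disjunct holds. In the second, if $\dval{nhip}$'s sequence number for $\dval{dip}$ is still $\osn$ in $N'$ then the route cannot have been invalidated in the meantime --- invalidation would have pushed the sequence number, which is at least $1$, above $\osn$ --- so it stayed valid throughout with sequence number $\osn$, its hop count only decreased, and hence is still at most $\hops<\hops\mathord+1$, the hop count of $\dval{ip}$'s new entry; and if the sequence number did rise, we are back in the first case.

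I expect the main obstacle to be precisely this Line~\ref{rreq:line6} argument. The hop count $\dval{ip}$ records is one more than $\dval{nhip}$'s \emph{past} distance to $\dval{dip}$, so one is forced to reach back (via \Prop{starcast_ii}) to the state in which $\dval{nhip}$ forwarded the request and then argue that, as long as the sequence number has not moved, $\dval{nhip}$'s distance has not grown since; making this rigorous needs the hop-count-monotonicity behaviour of $\fnupd$ and $\fninv$, and it must be complemented by a few auxiliary invariants of \cite{TR11} that dispose of the degenerate sequence-number-$0$ configurations (a valid entry with unknown sequence number, which those invariants confine to immediate-neighbour routes). It is worth stressing that, although the theorem speaks only of valid entries, the Line~\ref{rreq:line6} reasoning runs through \Prop{inv_nsqn}, which unavoidably argues about invalid entries as well --- so, as the authors note, the detour through net sequence numbers cannot be avoided.
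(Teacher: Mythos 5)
Your proposal is correct and takes essentially the same route as the paper, which itself only sketches this proof and defers to \cite{TR11}: an induction over transitions restricted to the $\fnupd$/$\fninv$ sites of Processes~\ref{pro:aodv_short} and~\ref{pro:rreq_short}, with Proposition~\ref{prop:starcast_ii} together with sequence-number monotonicity handling the reverse-route update of Line~\ref{rreq:line6}, and Proposition~\ref{prop:inv_nsqn}---including what it says about invalid entries---doing the rest. The loose ends you flag (unknown-sequence-number entries and the hop-count-$1$ neighbour updates via the fifth clause of $\fnupd$) are exactly the auxiliary invariants the paper likewise leaves to \cite{TR11}.
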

The proof \cite{TR11} is similar to the previous one, but makes use of
Proposition~\ref{prop:inv_nsqn}---including the case where
\plat{$\dval{dip}\notin\akd{nhip}$}---in an essential way.

From Theorem~\ref{thm:inv_a}, we can conclude

\begin{theorem}
AODV is loop free.
\end{theorem}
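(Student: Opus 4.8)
The plan is to deduce loop freedom directly from Theorem~\ref{thm:inv_a} by a standard monovariant argument along the edges of the routing graph. Fix a destination $\dval{dip}\in\IP$ and a reachable state $N$, and suppose for contradiction that the routing graph $\RG{N}{\dval{dip}}$ contains a loop, i.e.\ a cycle $\dval{ip}_0,\dval{ip}_1,\ldots,\dval{ip}_k=\dval{ip}_0$ with each $(\dval{ip}_j,\dval{ip}_{j+1})$ an edge. By definition of the routing graph, for every $j$ the node $\dval{ip}_j\not=\dval{dip}$ has a valid routing table entry for $\dval{dip}$ whose next hop is $\dval{ip}_{j+1}$; in particular $\dval{dip}\in\akd{\dval{ip}_j}$ for all $j$, and since $\dval{ip}_{j+1}=\nhp[\dval{dip}]{\dval{ip}_j}$ is itself on the cycle it also satisfies $\dval{ip}_{j+1}\not=\dval{dip}$ and $\dval{dip}\in\akd{\dval{ip}_{j+1}}$.

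Next I would apply Theorem~\ref{thm:inv_a} to each edge $(\dval{ip}_j,\dval{ip}_{j+1})$: the hypotheses $\dval{dip}\in\akd{\dval{ip}_j}\cap\akd{\dval{ip}_{j+1}}$ and $\dval{ip}_{j+1}\not=\dval{dip}$ are exactly what we have just checked, so we obtain that the pair $(\sq[\dval{dip}]{\dval{ip}_j},\ -\dhp[\dval{dip}]{\dval{ip}_j})$ is strictly less than $(\sq[\dval{dip}]{\dval{ip}_{j+1}},\ -\dhp[\dval{dip}]{\dval{ip}_{j+1}})$ in the lexicographic order on $\NN\times\NN$ (larger destination sequence number, or equal sequence number with strictly smaller hop count). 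Chaining this strict inequality around the cycle $j=0,1,\ldots,k-1$ and using transitivity of the lexicographic order yields
\[
\big(\sq[\dval{dip}]{\dval{ip}_0},\,-\dhp[\dval{dip}]{\dval{ip}_0}\big)
< \big(\sq[\dval{dip}]{\dval{ip}_k},\,-\dhp[\dval{dip}]{\dval{ip}_k}\big),
\]
but $\dval{ip}_k=\dval{ip}_0$, so the left and right sides are equal---a contradiction. Hence no loop exists, $\RG{N}{\dval{dip}}$ is loop free for every $\dval{dip}$, and therefore every reachable state $N$ is loop free, which is precisely the definition of AODV being loop free.

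The only real work here is bookkeeping: one must be careful that a cycle edge $(\dval{ip}_j,\dval{ip}_{j+1})$ supplies \emph{both} validity facts needed by Theorem~\ref{thm:inv_a}, namely a valid $\dval{dip}$-entry at $\dval{ip}_j$ \emph{and} at $\dval{ip}_{j+1}$; the latter is not immediate from "$\dval{ip}_{j+1}$ is the next hop" in general, but it holds because $\dval{ip}_{j+1}$ is itself a vertex with an outgoing edge on the cycle, which by the definition of $\RG{N}{\dval{dip}}$ forces it to have a valid $\dval{dip}$-entry. Once that observation is in place, the argument is the textbook "a strictly decreasing (here: increasing) quantity along a cycle is impossible" monovariant, and there is no further obstacle. (If one prefers, the same reasoning can be phrased as: Theorem~\ref{thm:inv_a} shows the lexicographic rank is a strict well-founded order on the edges of $\RG{N}{\dval{dip}}$, so the graph is a DAG and in particular acyclic.)
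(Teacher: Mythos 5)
Your proof is correct and follows essentially the same route as the paper: both apply Theorem~\ref{thm:inv_a} edge-by-edge around a hypothetical cycle (noting, as you do, that every vertex on the cycle differs from $\dval{dip}$ and has a valid entry) and derive a contradiction from a strictly monotone quantity. The only difference is cosmetic---you package sequence number and hop count into one lexicographic monovariant, whereas the paper first argues the sequence numbers are constant along the loop and then lets the hop counts strictly decrease.
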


\begin{proof}
If there were a loop in a routing graph $\RG{N}{\dval{dip}}$, then for
any edge $(\dval{ip},\dval{nhip})$ on that loop one would have
\plat{$\sq{\dval{ip}}\leq\sq{\dval{nhip}}$}, by \Thm{inv_a}.
Hence the value of \plat{$\sq{\dval{ip}}$} is the same for all nodes
\dval{ip} on the loop.
Thus, by \Thm{inv_a}, the sequence numbers keep decreasing when
travelling around the loop, which is impossible.
\end{proof}

\section{Analysing Variants of AODV}
\label{sec:analysis}

In this section, we use {\awn}  to model interpretations and variants of the
AODV protocol. 
Interpretations are just different readings of the RFC, 
variants are the result of modifications to address existing limitations.
Thanks to the use of process algebra, we can easily
adapt the proofs of established correctness properties
of the protocol, such as loop freedom.
This is in contrast to the analysis of variants of existing protocols
via simulations and testbed experiments, where all the work typically has to be
redone from scratch (and even then cannot provide the same level of assurance).

\subsection{Interpretations}\label{ssec:interpretation}
In this section we briefly discuss two ambiguities of the RFC together with possible interpretations. 
More can be found in~\cite{TR11}. 
Of course each interpretation may possibly create routing loops and hence has to be examined separately. 
The specification and the proofs formalised in \awn can easily be adapted.

\subsubsection{Invalidating Routing Table Entries}
We have already presented one contradiction of the AODV RFC in Section~\ref{sec:ambiguities}.
It was based on the question of what would happen if a node that has a
valid \rte for a destination $D$ receives an error message and invalidates the corresponding \rte.
In the same section we also list eight possible resolutions.
The first two, the only interpretations compliant with the RFC,
violate Theorem~\ref{thm:inv_a}, and yield routing loops \cite{ICNP12}.
The same holds for Interpretation \eqref{it:amb_iii}.
As stated before, to guarantee loop freedom one has to create an interpretation of AODV that is
(literally speaking) not compliant with the RFC.
Interpretations~\eqref{it:amb_iv} and \eqref{it:amb_v} are loop free---the proof is identical to the one of \eqref{it:amb_vi}, 
given in Section~\ref{sec:loopfreedom}.
The remaining two resolutions can be proven to be loop free as well~\cite{TR11}.

\subsubsection{Updating with the Unknown Sequence Number}
The AODV RFC~\cite{rfc3561} states that whenever a node receives a forwarded AODV control message from a neighbour (i.e., the neighbour is not the originator of the message), it creates a new or updates an existing {\rte} to that neighbour. In the presented specification, this update is modelled in Lines~\ref{aodv:line10}, \ref{aodv:line14} and \ref{aodv:line18} of Process~\ref{pro:aodv_short}.
In the event a new {\rte} is created, the \dsnf
is set to false to signify that the sequence number corresponding to
the neighbour is unknown. This interpretation is modelled
in~\cite{TR11} and is compliant with the RFC\@. However, in most
implementations of AODV (e.g.\ \cite{AODVUU,CB04}), an un\-known
sequence number is simply represented by the value $0$, rather than by
setting a flag. In the specification of Section~\ref{sec:formalspec}
we follow this approach of using the value~$0$.

Since the RFC does not make the update mechanism clear, different interpretations arise when an existing
valid {\rte} for the neighbour has to be updated. While it is clear that
expiry values for timers associated with the {\rte} will be
updated, it is not clear if a valid sequence number with value $n$ $(\mathord>0)$ will remain unmodified as it is, or be updated to the value
$0$. In order to verify which interpretations are reasonable, we check
which ones satisfy the invariants specified before (e.g., \Thm{inv_a}).

If we assume that an entry $(\dval{dip},0,\val,\dval{hops}',*,*)$
replaces an entry $(\dval{dip},\dval{dsn},\val,\dval{hops},*,*)$ (where $\dval{dsn}\mathord>0$) in a
routing table, it is easy to see that
\Thm{inv_a} is violated.
In fact, it is not hard to exploit this to create a routing loop, since it allows the possibility of decreasing destination sequence numbers~\cite[Sect.\ 9.1]{TR11}. 
This is the interpretation that is implemented by AODV-UIUC and AODV-UCSB.
Our interpretation follows  AODV-UU: an entry $(\dval{dip},\dval{dsn},\val,\dval{hops},\dval{nhip},*)$ is replaced by $(\dval{dip},\dval{dsn},\val,1,\dval{sip},*)$, i.e., in the existing entry only the next hop and the hop count is updated---the sequence number stays the same. 
This interpretation is not in line with the RFC---the RFC never merges information of two routes. However, it is loop free (cf.\ Section~\ref{sec:loopfreedom}).
AODV-ns2 does not perform an update if a \rte already exists, i.e., it uses a version of $\fnupd$ as in Section~\ref{sec:aodv}, but without the fifth clause. This interpretation is also loop free; the proof is identical to the one presented in the previous section.

There are more interpretations possible if the \dsnf
(the flag indicating whether a sequence number is known or unknown) is modelled. 
\begin{enumerate}[(a)]
\item Set the flag to unknown and the sequence number to~$0$. Using the same argument as before, this can yield routing loops.
\item Set the flag to unknown, but keep the destination sequence number stored in the routing table.
This interpretation is indeed loop free and is most likely the intention of the AODV RFC.
\end{enumerate}
Detailed proofs can be found in~\cite{TR11}.

\subsection{Variants}
Let us now turn to variants of AODV\@ and look at (known) shortcomings of the AODV protocol, present possible improvements and then use \awn to verify that the modified AODV is still loop free.
\subsubsection{Non-Optimal Route Selection}\label{ssec:nonoptimal routes}
In AODV's route discovery process, a destination node (or an
intermediate node with an active route to the destination node) will
generate a RREP message in response to a received RREQ message. 
The RREQ message is then discarded and not forwarded. This termination
of the route discovery process at the destination can lead to
other nodes inadvertently creating non-optimal routes to the source
node \cite{MK10}, where route optimality is defined in terms of a
metric, for example hop count. In~\cite{MK10} it is shown
that during the route discovery process in AODV, the only nodes that
generally discover optimal routes to the source and
destination nodes are those lying on the selected route between
the source node and the destination node (or the intermediate node) generating the reply.
All other network nodes receiving the RREQ message (in particular those located
``downstream'' of the destination node) may inadvertently be
creating non-optimal routes to the source node due to the unavoidable
receipt of RREQ messages
over other routes.

 We illustrate this by the example in Figure~\ref{non-optimal-route}.
There, node $S$ wants to find a route to node $D$. It generates and
 broadcasts a RREQ message that is received by its neighbours $D$
 and~$B$. Since~$D$ is the destination, it responds with a
 RREP message. The received RREQ message is discarded and not
 forwarded. On the other hand, $B$ continues to forward its
 received RREQ message, which eventually arrives at $A$. At node
 $A$, a {\rte} is created for the source $S$, with a hop count
 of six. This is clearly not optimal, as $A$ is only two hops away from $S$.
 Due to the discarding of the RREQ message at $D$, $A$ is
 prevented from discovering its optimal route to $S$.

\begin{figure}[t]
\centerline{
\includegraphics[scale=0.94]{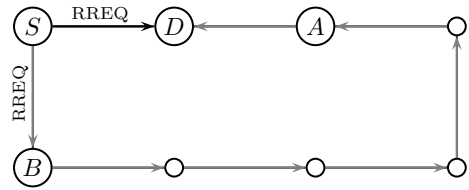}}
\caption{Non-optimal route selection}
\label{non-optimal-route}
\vspace{-4ex}
\end{figure}

A possible modification to AODV to solve this problem is to allow the
destination node to continue to forward the RREQ message. This will
then enable $A$ in Figure~\ref{non-optimal-route} to discover its
optimal route to $S$. In addition, the forwarded RREQ message
from the destination node is modified to include a flag that indicates
a RREP message has already been generated and sent in response to the
former message. This is to prevent other nodes (with active routes to
the destination) from sending a RREP message in response to their
reception of the forwarded RREQ message.

The entire specification of this variant (in \cite{TR11}) differs only in 
five lines from the original---all of which are contained in the process {\RREQ};
the other processes remain unchanged.
The changes introduce the new flag and a case distinction based on that, 
as well as three new broadcasts.
For example, after initiating a route reply at the destination (Process~\ref{pro:rreq_short}, Line \ref{rreq:line14a}),
the route request message is forwarded:%
\newcommand{\mrreq}[8]{\rreqID(#1,#2,#3,#4,#5,#6,#7,#8)}%
\newcommand{\alreadyrep}{\keyw{handled}}%
\newcommand{\true}{\ensuremath{\keyw{true}}}%
\newcommand{\false}{\ensuremath{\keyw{false}}}%
\vspace{-1mm}
\[\mbox{
\small\textbf{broadcast}(\mrreq{\hops$\mathord+1$}{\rreqid}{\dip}{\dsn}{\oip}{\osn}{\ip}{\true})\ ,
}\vspace{-1mm}\]
where the last component of the RREQ message is the newly introduced flag.
The proofs of important properties (e.g., loop freedom) are still valid. 
The proofs of the invariants proceed by examining lines in our
processes where the invariant might be invalidated: The proof of
Proposition~\ref{prop:starcast_ii}, which can be found in \cite{TR11}, checks all occurrences
of sending a RREQ message, and the proofs of Proposition~\ref{prop:inv_nsqn} and \Thm{inv_a} check all
occurrences of {\fnupd} and {\fninv}. For the former, three new
{\bf broadcast}-commands have to be examined;
however these cases are similar to the {\bf broadcast} 
already implemented in the original process {\RREQ} (Line~\ref{rreq:line34}).
For the latter, no extra effort is needed, as the modification does
not involve occurrences of {\fnupd} and {\fninv}.

\begin{figure*}[t]
\renewcommand{\msg}[3]{#1${}_{\!\!\;\MakeUppercase{#2}\!\!\:\shortrightarrow\!\MakeUppercase{#3}}$}
\subfloat[][\rule{0pt}{10pt}$S$~broadcasts a RREQ~to~$D$]{\raisebox{5.25mm}{\hspace{4.25mm}\includegraphics[scale=1]{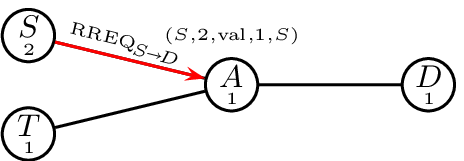}\hspace{4.25mm}}}
\hfill 
\subfloat[][\rule{0pt}{10pt}$T$~broadcasts a RREQ~to~$D$.]{\raisebox{5.25mm}{\hspace{4.25mm}\includegraphics[scale=1]{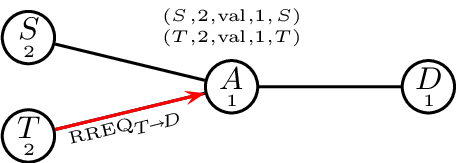}\hspace{4.25mm}}}
\hfill
\subfloat[][\rule{0pt}{10pt}$A$ re-broadcasts \msg{RREQ}{s}{d}]{\includegraphics[scale=1]{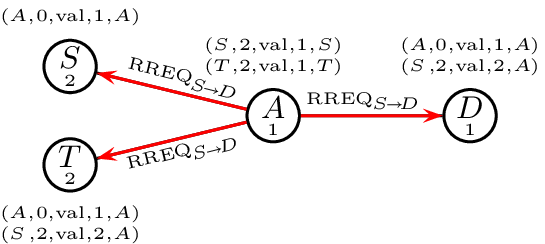}}

\vspace{2mm}

\subfloat[][\rule{0pt}{10pt}$A$ re-broadcasts  \msg{RREQ}{t}{d}]{\includegraphics[scale=1]{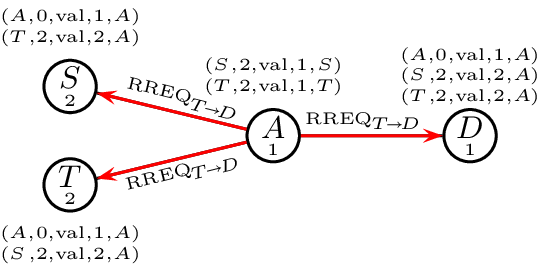}}
\hfill
\subfloat[][\rule{0pt}{10pt}$D$ unicasts \msg{RREP}{s}{d} to $A$;\\
\phantom{(e) }$A$ forwards to $S$]{\includegraphics[scale=1]{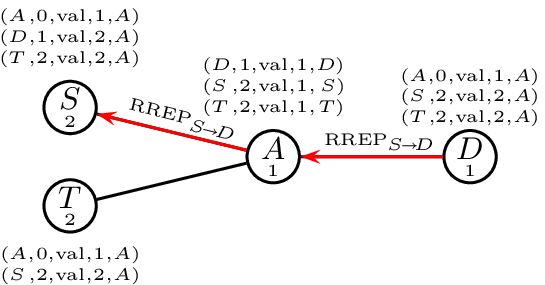}}
\hfill
\subfloat[][\rule{0pt}{10pt}$D$ unicasts \msg{RREP}{t}{d} to $A$;\\
\phantom{(f) }$A$ drops \msg{RREP}{t}{d}]{\includegraphics[scale=1]{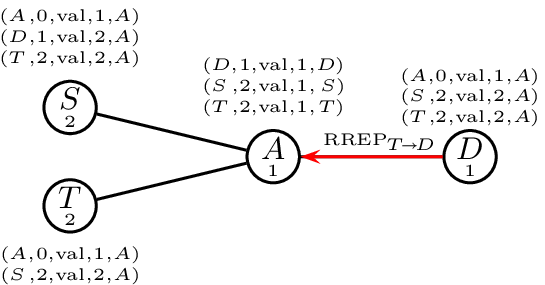}}
\caption{Failure of route discovery process}
\label{fig:topo}
\vspace{-1ex}
\end{figure*}

\subsubsection{Failure of Route Discovery Process}\label{ssec:failure}
In AODV's route discovery process, a RREP message from the destination
node is unicast back along the selected route towards the
originator of the RREQ message. Every intermediate node on the
selected route will process the RREP message and, in most cases,
forwards it towards the originator node. However, there is a
possibility that the RREP message is discarded at an intermediate
node and hence results in the originator node not receiving a
reply. The discarding of the RREP message is due to the RFC
specification of AODV \cite{rfc3561} stating that an intermediate node
only forwards the RREP message if it is not the originator node
\emph{and} it has created or updated a {\rte} to the destination
node described in the RREP message. The latter requirement means that
if a valid {\rte} to the destination node already exists, and is
not updated when processing the RREP message, then the intermediate
node will not forward the message. We illustrate this problem
with an example.\footnote{\raggedright A slightly different example was
given on the MANET mailing list
\url{http://www.ietf.org/mail-archive/web/manet/current/msg05702.html}}

Figure~\ref{fig:topo} shows a four-node topology. In Figures~\ref{fig:topo}(a) and \ref{fig:topo}(b), source nodes $S$ and $T$, respectively, initiate a route discovery process to search for a route to~$D$. When generating a RREQ message, the source node increments its sequence number before inserting it into the message. 
In processing the RREQ messages from $S$ and $T$, node $A$ creates \rtes in its routing table.

In Figures~\ref{fig:topo}(c) and \ref{fig:topo}(d), node $A$ re-broadcasts the RREQ messages that it received previously. The destination node $D$ receives the RREQ messages, and creates corresponding entries in its routing table. In Figure~\ref{fig:topo}(e), $D$ replies with a RREP message in response to the RREQ message from node $S$. Since the RREQ message from~$S$ does not contain any information on the destination sequence number for $D$, node $D$ inserts its sequence number of $1$ into the RREP message. This message is then processed by $A$ (a \rte to $D$ is created) and forwarded to $S$. 

Similarly, in Figure~\ref{fig:topo}(f), node $D$ replies with a RREP message in response to the RREQ message from node $T$. 
Since the RREQ message from node $T$ does not contain any information on the destination sequence number for node $D$, node $D$ again inserts its sequence number of $1$ into its RREP message. When the intermediate node $A$ receives the RREP message, it processes the message. However, the existing {\rte} that node $A$ already has for destination node $D$ contains the same information (same destination sequence number and same hop count) as in the received RREP message. Therefore, node $A$ does not update its {\rte} for node $D$ and thus, according to the RFC specification, will not forward the RREP message to the source node $T$. This then leads to an unsuccessful route discovery process for node $T$.

A solution to this problem is to require intermediate nodes to forward 
\emph{all} RREP messages that they receive using the newest available information on 
the route to the destination node: if the node's  routing table contains an entry for the destination node
that is valid and  ``fresher'' than that in the RREP message, the intermediate node
updates the contents of the RREP message to reflect this.
The intermediate node $A$ will then forward a RREP message containing up-to-date information on the destination node~$D$.

As in Section~\ref{ssec:nonoptimal routes},  this solution 
does not 
violate any of the
invariants; again this follows by adapting the original proofs~\cite{TR11}.

\section{Related Work}
\label{sec:related}
Previous attempts to prove loop-freedom of AODV have been reported
in \cite{AODV99,BOG02,ZYZW09}, but none of these proofs are complete and valid for the current version of AODV \cite{rfc3561}.
\cite{AODV99} fails to consider the effect of RERR messages, which can result in routing loops;
\cite{BOG02} analyses an earlier draft of AODV and uses an invariant that does not hold when following the AODV RFC;
and \cite{ZYZW09} only considers a restricted version of AODV, not
covering the important case of route replies by intermediate nodes. 
Details of the limitations of these proof attempts are provided in~\cite{ICNP12}.

Graph Transformation Systems were used in~\cite{SWJ08} to model DYMO
v10, a protocol derived from AODV\@. The paper provides a
semi-algorithm, based on graph rewriting, which was used to verify
loop-freedom for DYMO\@. Model checking is also used 
to verify properties of routing protocols for WMNs. For example, \cite{WPP04} shows loop freedom of the ad-hoc protocol LUNAR, for fixed topologies
or set changes in the topology. Model checking in general lacks the ability to
verify protocols for an arbitrary and changing topology.
It is used to check specific scenarios only.

\section{Conclusions}\label{sec:conclude}
In this paper, we have presented a complete and accurate model of 
the core functionality of AODV using the process algebra {\awn}, which has been tailored 
specifically for the formal modelling of 
wireless  mesh networks and MANETs. The unique set of features 
and primitives of {\awn} allows the creation of accurate, concise and 
readable models of relatively complex and practically relevant 
network protocols, which we have demonstrated with AODV\@.
This is in contrast to some prior related work, which either 
modelled only very simple protocols, or modelled only a 
subset of the functionality of relevant WMN or MANET routing 
protocols.

The currently predominant practice of informally 
specifying WMN and MANET protocols via English prose has a potential 
for ambiguity and inconsistent interpretation. The ability to provide 
a formal and unambiguous specification of such protocols via {\awn} 
is a significant benefit in its own right.
We have demonstrated how  {\awn} 
can be used as a basis for reasoning about
critical protocol correctness properties, illustrated with the example of loop freedom. 
We have further shown how relevant proofs can relatively easily be 
adapted to protocol variants. 
In contrast to protocol evaluation using simulation, test-bed experiments or model checking, 
where only a finite number of specific network scenarios can be considered, our reasoning with {\awn} is generic and the the proofs hold for any possible network scenario in terms of topology and traffic pattern. None of the experimental protocol evaluation approaches can deliver this high degree of assurance about protocol behaviour.

\bibliographystyle{abbrv}

\begin{thebibliography}{10}

\bibitem{AODVNIST}
{Kernel AODV (ver. 2.2.2), NIST}.
\newblock \url{http://www.antd.nist.gov/wctg/aodv_kernel/}.

\bibitem{AODVUU}
{AODV-UU}: An implementation of the {AODV} routing protocol {(IETF RFC 3561)}.
\newblock \url{http://sourceforge.net/projects/aodvuu/}.

\bibitem{Verisim}
K.~Bhargavan, C.~A. Gunter, M.~Kim, I.~Lee, D.~Obradovic, O.~Sokolsky, and
  M.~Viswanathan.
\newblock Verisim: Formal analysis of network simulations.
\newblock {\em {IEEE} Transactions on Software Engineering}, 28(2):129--145,
  2002.

\bibitem{BOG02}
K.~Bhargavan, D.~Obradovic, and C.~A. Gunter.
\newblock Formal verification of standards for distance vector routing
  protocols.
\newblock {\em J. ACM}, 49(4):538--576, 2002.

\bibitem{DYMO22}
I.~Chakeres and C.~Perkins.
\newblock Dynamic {MANET} on-demand {(AODVv2)} routing.
\newblock Internet Draft (Standards Track), draft-ietf-manet-dymo-22, 2012.
\newblock \url{tools.ietf.org/html/draft-ietf-manet-dymo-22}.

\bibitem{CB04}
I.~D. Chakeres and E.~M. Belding-Royer.
\newblock {AODV} routing protocol implementation design.
\newblock In {\em Workshop on Wireless Ad Hoc Networking
  (WWAN'04)}, 2004.

\bibitem{ESOP12}
A.~Fehnker, R.~J. van Glabbeek, P.~H{\"o}fner, A.~McIver, M.~Portmann, and
  W.~L. Tan.
\newblock A process algebra for wireless mesh networks.
\newblock In H.~Seidl, editor, {\em European Symposium on Programming
  (ESOP'12)}, volume 7211 of {\em LNCS}, pages 295--315. Springer, 2012.

\bibitem{TR11}
A.~Fehnker, R.~J. van Glabbeek, P.~H{\"o}fner, A.~McIver, M.~Portmann, and
  W.~L. Tan.
\newblock A process algebra for wireless mesh networks used for modelling,
  verifying and analysing {AODV}.
\newblock Technical Report 5513, NICTA, 2012.
\newblock \url{http://www.nicta.com.au/pub?id=5513}.

\bibitem{ICNP12}
R.~J. van Glabbeek, P.~H{\"o}fner, W.~L. Tan, and M.~Portmann.
\newblock Sequence numbers do not guarantee loop freedom---{AODV} can yield
  routing loops, 2012.
\newblock \url{http://rvg.web.cse.unsw.edu.au/pub/AODVloop.pdf}$\!\!$.

\bibitem{HWMP}
{IEEE P802.11s}.
\newblock {IEEE} draft standard for information technology---telecommunications
  and information exchange between systems---local and metropolitan area
  networks---specific requirements---part 11: Wireless {LAN} {M}edium {A}ccess
  {C}ontrol {(MAC)} and physical layer {(PHY)} specifications-amendment 10:
  Mesh networking, July 2010.

\bibitem{Kawadia03}
V.~Kawadia, Y.~Zhang, and B.~Gupta.
\newblock System services for ad-hoc routing: Architecture, implementation and
  experiences.
\newblock In {\em Mobile Systems, Applications and
  Services (MobiSys'03)}, pages 99--112. ACM, 2003.

\bibitem{MK10}
S.~Miskovic and E.~W. Knightly.
\newblock Routing primitives for wireless mesh networks: Design, analysis and
  experiments.
\newblock In {\em Information Communications (INFOCOM'10)}, pages
  2793--2801. IEEE, 2010.

\bibitem{rfc3561}
C.~Perkins, E.~Belding-Royer, and S.~Das.
\newblock Ad hoc on-demand distance vector {(AODV)} routing, RFC 3561
  (experimental), 2003.
\newblock \url{http://www.ietf.org/rfc/rfc3561.txt}.

\bibitem{AODV99}
C.~Perkins and E.~Royer.
\newblock {Ad-hoc On-Demand Distance Vector Routing}.
\newblock In {\em Mobile Computing Systems and
  Applications (WMCSA'99)}, pages 90--100, 1999.

\bibitem{PPI08}
A.~A. Pirzada, M.~Portmann, and J.~Indulska.
\newblock Performance analysis of multi-radio {AODV} in hybrid wireless mesh
  networks.
\newblock {\em Computer Communications}, 31(5):885--895, 2008.

\bibitem{AODV-ST}
K.~Ramachandran, M.~M. Buddhikot, G.~Chandranmenon, S.~Miller,
  E.~Belding-Royer, and K.~Almeroth.
\newblock On the design and implementation of infrastructure mesh networks.
\newblock In {\em Workshop on Wireless Mesh Networks (WiMesh'05)}. IEEE,
  2005.

\bibitem{SWJ08}
M.~Saksena, O.~Wibling, and B.~Jonsson.
\newblock Graph grammar modeling and verification of ad hoc routing protocols.
\newblock In C.~R. Ramakrishnan and J.~Rehof, editors, {\em Tools and
  Algorithms for the Construction and Analysis of Systems (TACAS'08)}, volume
  4963 of {\em LNCS}, pages 18--32. Springer, 2008.

\bibitem{SBM06}
A.~Subramanian, M.~Buddhikot, and S.~Miller.
\newblock Interference aware routing in multi-radio wireless mesh networks.
\newblock In {\em IEEE Workshop on Wireless Mesh Networks (WiMesh'06)}. IEEE,
  2006.

\bibitem{WPP04}
O.~Wibling, J.~Parrow, and A.~N. Pears.
\newblock Automatized verification of ad hoc routing protocols.
\newblock In D.~de~Frutos-Escrig and M.~N{\'u}{\~n}ez, editors, {\em Formal
  Techniques for Networked and Distributed Systems (FORTE '04)}, volume 3235 of
  {\em LNCS}, pages 343--358. Springer, 2004.

\bibitem{ZYZW09}
M.~Zhou, H.~Yang, X.~Zhang, and J.~Wang.
\newblock The proof of {AODV} loop freedom.
\newblock In {\em Wireless Communications \& Signal Processing (WCSP'09)}.
  IEEE, 2009.

\end{thebibliography}

\balancecolumns
\end{document}